\documentclass[11pt,a4paper]{article}

\usepackage[utf8]{inputenc}
\usepackage[T1]{fontenc}
\usepackage[margin=1in]{geometry}
\usepackage{amsmath,amssymb,amsthm,mathtools}
\usepackage{enumitem}
\usepackage{hyperref}
\usepackage{xcolor}
\hypersetup{
  colorlinks=true,
  linkcolor=blue!50!black,
  urlcolor=blue!50!black,
  citecolor=blue!50!black
}

\usepackage{microtype}

\usepackage{tikz}
\usetikzlibrary{positioning, fit, backgrounds}

\theoremstyle{plain}
\newtheorem{theorem}{Theorem}
\newtheorem{proposition}[theorem]{Proposition}
\newtheorem{lemma}[theorem]{Lemma}
\newtheorem{corollary}[theorem]{Corollary}

\theoremstyle{definition}
\newtheorem{definition}[theorem]{Definition}
\newtheorem{example}[theorem]{Example}

\theoremstyle{definition}
\newtheorem{remark}[theorem]{Remark}

\newcommand{\N}{\mathcal{N}}
\newcommand{\I}{\mathcal{I}}
\newcommand{\Lin}{\mathcal{L}}
\newcommand{\Wk}{\mathcal{W}}
\newcommand{\PartN}{\mathrm{Part}(\N)}
\newcommand{\Part}{\mathrm{Part}}
\newcommand{\Sym}{\mathrm{Sym}}
\newcommand{\Stab}{\mathrm{Stab}}

\title{An Axiomatic Theory of Tie-Breaking: Impossibility, Characterization, and Decomposition}
\author{Frank M.\,V.\,Feys}
\date{\today}

\begin{document}
\maketitle

\begin{abstract}
\noindent
We develop an abstract axiomatic theory of tie-breaking. 
A \emph{tie-breaking input} consists of a finite set $\N$ of
players, a weak order on $\N$ representing the standings to be
refined, and an auxiliary information item drawn from a set on which
the symmetric group $\Sym(\N)$ acts. 
Within this minimal framework we prove three theorems. 
First, no tie-breaking rule producing a
strict linear order can be anonymous, provided the input space
contains even one intrinsically symmetric situation, a condition
met in essentially every realistic application. 
Second, when we allow the rule to
output a partition of $\N$ (rather than a strict ranking), there is a
unique rule satisfying two natural axioms: it is the partition of
$\N$ into orbits of the joint stabilizer of the input. 
Third, every reasonable strict tie-breaking rule decomposes uniquely as the
canonical orbit partition followed by an arbitrary completion. 
The decomposition makes precise the informal observation that real
tie-breaking systems are honest until forced to be arbitrary.
The framework is broad enough to capture chess tournament
tie-breakers, sports league regulations, voting tie-breakers,
tie-breaking among symmetric players in cooperative games,
and ranking by network centrality measures, all
within a single uniform formalism.
\end{abstract}

\section{Introduction}
\label{sec:intro}

In June 2010, a tennis match between John Isner and Nicolas Mahut at
Wimbledon reached the score 70--68 in the fifth and final set.
Wimbledon's rules at the time imposed no limit on the score in the
final set, and in the end the match was decided merely by exhaustion. 
After the match, Wimbledon changed its rules to introduce a fixed cap at
12--12 in the final set; once the games reach that score, the
players no longer continue under ordinary scoring but settle the
set with a short playoff, first to seven points. 
The new rule, however, does not really address the underlying difficulty. 
When the score reaches 12--12, the two players are tied: they have won the same
number of points and the same number of games, and the match record
contains no information that singles out one of them as a winner.
The cap stops the match at a fixed score and hands the decision to a playoff, which is a separate contest whose winner is not determined by the preceding match record. This rule change replaces an indefinite continuation with a fixed stopping point, 
but the arbitrariness of selecting a winner from a tied state is simply
relocated, not eliminated.

The Isner--Mahut match is an extreme illustration of a phenomenon that
arises throughout aggregation theory. 
\emph{Strict} tie-breaking, which produces a definitive ranking from data that does not, by itself,
distinguish the players, always requires either residual ties or
arbitrary external data. 
By residual ties we mean ties that the rule
leaves unbroken in its output, accepting that some pairs of players
cannot honestly be separated. 
By arbitrary external data we mean data
not contained in the input itself, such as a coin flip, a
pre-tournament draw, an alphabetical convention on players' names, or
a fixed seeding determined before the contest.
In chess tournaments for example, ties at the top of the
standings are broken by a cascade (Buchholz, Sonneborn--Berger, direct
encounter, etc.), terminated by a coin flip or alphabetical order
when the cascade is exhausted. 
In professional sports leagues, goal difference is followed by goals scored, 
then head-to-head record, then
fair-play points, and eventually a draw. 
In voting, plurality and Borda rules need tie-breakers, and these tie-breakers (lottery, agenda,
fixed voter) all violate fairness in some direction.

A second, more theoretical illustration of the same phenomenon is the Condorcet cycle. 
Indeed, three voters with cyclic preferences given by 
$a \succ b \succ c$, $b \succ c \succ a$, and $c \succ a \succ b$
produce a profile in which the candidates are intrinsically symmetric.
The cyclic permutation $a \mapsto b \mapsto c \mapsto a$ acts on the
candidates and preserves the (anonymous) profile of ballots, since each
ballot type is sent to another ballot type appearing with the same
multiplicity. 
Hence, any anonymous social ranking must place $a$, $b$,
and $c$ as equivalent. 
Yet pairwise majorities are strict and cyclic.
There is no fair way to select a winner.

The Condorcet cycle is the canonical illustration of a celebrated
theorem of Arrow~\cite{arrow1963}; for a comprehensive modern
survey of the field, see the \emph{Handbook of Computational
Social Choice}~\cite{handbook2016}. 
Arrow's impossibility theorem says
that no procedure aggregating individual preferences into a collective
ranking can simultaneously satisfy universal domain, the Pareto
principle, independence of irrelevant alternatives, and
non-dictatorship. 
Roughly stated, there is no good way of aggregating
individual preferences into a collective ranking. 
Arrow's result is the foundational impossibility theorem of modern social choice, and
it inaugurated a tradition in which folk-wisdom observations about
the difficulty of fair aggregation are converted into precise
mathematical statements with structural content. 
The present paper sits in this tradition. 
Our framework is more abstract than Arrow's,
in that it is not specifically about voting; the same theorems apply
uniformly to chess tournaments, sports leagues, voting, cooperative
games, and network centrality. 
Within this framework, the Condorcet cycle is just one example among many.

Our first result, Theorem~A below, is in the same spirit.
Stated at the highest level: \emph{there is no good way to break a tie.}
Just as Arrow's theorem says there is no good way of aggregating
individual preferences into a collective ranking, our impossibility
theorem says there is no good way of refining a partial ranking into a strict one.
Both results have the same structural shape, namely a
small set of natural desiderata (anonymity, refinement, and
strictness in our case; universal domain, Pareto, independence of
irrelevant alternatives, and non-dictatorship in Arrow's) that turn
out to be jointly inconsistent on any reasonable input.
As with Arrow's theorem, ``no good way'' should be read with care.
It means ``no rule satisfying these particular axioms,'' not ``no rule whatsoever.''
Practitioners use tie-breaking rules every day, just as
practitioners run elections every day.
What both theorems pin down is exactly which desideratum must be given up, 
and the structure of what remains.
Theorems~B and~C below make this structural content precise.
The orbit partition $\Omega$ is the canonical, forced part of any
honest tie-breaking, and any strict tie-breaking rule must supply, in
addition, an irreducibly arbitrary completion.

The examples mentioned above are diverse in surface but share a single underlying structure.
In each, an aggregation procedure produces a coarse weak order on a set of players or alternatives. 
Some pairs of players are tied in this coarse order, and the practical task is to refine the order further. 
 The auxiliary data available to refine it (game records, voter ballots, 
characteristic-function values) carries an  intrinsic symmetry group. 
When two tied players are exchanged by a symmetry of this auxiliary data, 
no honest refinement can distinguish them, 
and the practitioner is therefore forced either to leave the tie unbroken
or to introduce some form of arbitrary external data.

The aim of this paper is to give a clean, abstract, axiomatic account of this phenomenon. 
We work in a minimal framework: a finite player
set $\N$, a weak order $\succeq$ on $\N$ representing the standings to
be refined, and an auxiliary information item $h$ drawn from a set
$\I$ on which the symmetric group $\Sym(\N)$ acts. 
A tie-breaking rule is a function $T=T(h, \succeq)$ producing some refinement of $\succeq$. 
The framework is deliberately structure-free. 
In fact, $\I$ has no internal structure beyond the $\Sym(\N)$-action. 
As we shall show, this minimal setting is rich enough to capture every concrete tie-breaking
practice we are aware of, and yet abstract enough to admit clean
theorems about all of them at once.

A reader may reasonably ask why we work at this level of abstraction.
The answer, in fact, is empirical. 
The same obstruction to strict tie-breaking appears in chess Swiss-system tournaments, in round-robin sports
leagues, in voting on a cycle of candidates, in cooperative games
with symmetric players, and in network centrality on graphs with
non-trivial automorphism groups, among others. 
Without a common framework, these are five separate observations, 
each established by methods specific to its setting. 
With a common framework, however, they are five instances of just one theorem. 
The abstraction is what allows the impossibility, the
characterization, and the decomposition to be stated and proved once, then applied uniformly. 
Section~\ref{subsec:examples} works through
all five settings as instances of the framework, and we view this
collection of examples as the empirical case for the abstraction.

We prove three theorems.

 \medskip

\noindent\textbf{Theorem A (Impossibility; Theorem~\ref{thm:impossibility} below).} \textit{Strict tie-breaking is impossible.} 
No rule $T \colon \I \times \Wk(\N) \to \Lin(\N)$ producing
linear orders can be anonymous, provided the input space contains
even a single \emph{symmetric input}: 
an input $(h, \succeq)$ admitting two distinct players that some intrinsic symmetry of $(h, \succeq)$ exchanges. 
Symmetric inputs exist in essentially every realistic instance of the framework,
so the impossibility applies essentially everywhere.

 \medskip

\noindent\textbf{Theorem B (Characterization; Theorem~\ref{thm:characterization} below).} \textit{Partition-valued tie-breaking has a unique honest answer.} 
If we allow the rule to output a partition of $\N$ (a record of who the rule declares
``still tied''), the unique rule satisfying two natural axioms,
\emph{symmetry-saturation} and \emph{maximal-fineness}, is
$T(h, \succeq) = \Omega(h, \succeq)$, where $\Omega$ is the partition
into orbits of the joint stabilizer of $(h, \succeq)$.

 \medskip

\noindent\textbf{Theorem C (Decomposition; Theorem~\ref{thm:decomposition} below).} \textit{Every reasonable strict tie-breaking rule 
decomposes uniquely as canonical core plus arbitrary completion.} 
Given any strict rule $T$ that respects the
canonical partition (in a sense made precise later), $T$ corresponds
bijectively to a choice of completion: a linear order within each
block of $\Omega$, together with a linear order on the blocks of
$\Omega$ within each indifference class of $\succeq$. 
The canonical part is forced; the completion is the arbitrary part.

\medskip

The three theorems together capture the phenomenon. 
Strict tie-breaking is impossible (Theorem A); 
the partition-valued analogue exists uniquely (Theorem B); 
and every reasonable strict tie-breaker
is the canonical partition plus arbitrary linearizing data (Theorem C). 
The decomposition theorem, in particular, makes precise
the informal observation that \emph{tie-breaking systems used in practice are honest
until forced to be arbitrary}. 
The canonical part records exactly the
structure the input determines; the completion records exactly the
data that must be supplied externally.

\subsection*{Outline}

The paper is organized as follows. 
Section~\ref{sec:framework} sets
up the abstract framework, with a substantial subsection of worked
examples illustrating the breadth of the formalism.
Section~\ref{sec:imp} proves the impossibility theorem.
Section~\ref{sec:partition} introduces partition-valued rules and
proves the characterization theorem. 
Section~\ref{sec:decomp} proves the decomposition theorem and revisits two of the worked examples.
Section~\ref{sec:weak} gives the weak-order perspective, for
readers who prefer that language. 
Section~\ref{sec:related} discusses related literature. 
Section~\ref{sec:conclusion} concludes with directions for further work.

\section{Framework}
\label{sec:framework}

We fix throughout a finite set $\N$ with $|\N| \geq 2$. 
Its elements are called \emph{players}. 
The use of ``player'' is not meant to restrict the framework to game-theoretic settings; 
it is a neutral term standing in for ``alternatives'' in voting, ``teams''
in sports, ``contestants'' in tournaments, 
or any other interpretation
the application in question calls for.

\subsection{Weak Orders, Partitions, Refinement}

We begin by fixing notation for the order-theoretic and combinatorial
objects that will appear repeatedly. 
The two basic notions are weak orders, which model standings that may have ties, and partitions,
which record the equivalence structure of these ties. 
The relation between them is captured by the indifference partition map, which
sends a weak order to its partition into ``ranked-equally'' classes.
The notion of refinement gives a partial order on partitions, with
finer partitions sitting below coarser ones, and the indifference
partition map is order-preserving in a natural sense.

A \emph{partition} of $\N$ is a set $\mathcal{P}$ of pairwise disjoint
nonempty subsets covering $\N$. 
The members of $\mathcal{P}$ are called \emph{blocks}. 
We write $\PartN$ for the set of partitions of
$\N$. For $\mathcal{P}, \mathcal{Q} \in \PartN$, we say that
$\mathcal{P}$ \emph{refines} $\mathcal{Q}$, written
$\mathcal{P} \trianglelefteq \mathcal{Q}$, if every block of
$\mathcal{P}$ is contained in some block of $\mathcal{Q}$. 
Refinement is a partial order on $\PartN$, with top $\{\N\}$ (the trivial
partition, one block) and bottom $\{\{x\} \mid x \in \N\}$ (the discrete
partition, all singletons).

A \emph{weak order} on $\N$ is a reflexive, transitive, total binary relation. 
We write $\Wk(\N)$ for the set of weak orders on $\N$, and
$\Lin(\N)$ for the subset of \emph{linear orders} (antisymmetric weak
orders).

For ${\succeq} \in \Wk(\N)$, the relation
$x \sim y \iff (x \succeq y$ and $y \succeq x)$ is an equivalence
relation; its equivalence classes are the \emph{indifference classes} of $\succeq$. 
We denote by $\Part(\succeq)$ the partition of $\N$
into indifference classes of $\succeq$.

\subsection{The Canonical $\Sym(\N)$-Actions}
\label{subsec:Sym-actions}

The framework rests on the following idea: the players in $\N$ are
interchangeable from the standpoint of the rule we eventually construct. 
Concretely, we should be able to relabel them, and any
honest rule must produce the relabeled output when given relabeled input. 
To formalize this, we need a careful account of what it means
to relabel each kind of object that appears in the framework: 
weak orders, partitions, and (later) auxiliary information. 
Each of these has a natural action of the symmetric group $\Sym(\N)$, and these
actions are compatible with one another in a precise sense. 
We make this explicit here.

Note that the symmetric group $\Sym(\N)$ acts naturally on weak orders and on partitions of $\N$. 
For $\sigma \in \Sym(\N)$ and
${\succeq} \in \Wk(\N)$, define $\sigma \cdot {\succeq}$ by
\[
   x \,(\sigma \cdot {\succeq})\, y
   \quad \iff \quad
   \sigma^{-1}(x) \succeq \sigma^{-1}(y).
\]
For $\sigma \in \Sym(\N)$ and $\mathcal{P} \in \PartN$, define
\[
   \sigma \cdot \mathcal{P}
   :=
   \{\sigma(B) \mid B \in \mathcal{P}\}.
\]
Both are group actions in the standard sense: 
the identity acts as the identity, and
$   (\sigma\tau) \cdot x = \sigma \cdot (\tau \cdot x)$
for all $\sigma, \tau \in \Sym(\N)$ and all $x$ in the action set.
Furthermore, the indifference partition map $\Part \colon \Wk(\N)
\to \PartN$ is equivariant:
\[
   \Part(\sigma \cdot {\succeq})
   =
   \sigma \cdot \Part({\succeq}).
\]
This identity follows from the observation that
$x \sim_{\sigma \cdot {\succeq}} y$ 
if and only if
$\sigma^{-1}(x) \sim_{\succeq} \sigma^{-1}(y)$, so the indifference
classes of $\sigma \cdot \succeq$ are exactly the $\sigma$-images of
the indifference classes of $\succeq$.

\subsection{Information Space and Inputs}

We come now to the  central modeling decision of the paper. 
A tie-breaking rule must work with two ingredients. 
First, the standings to be refined, given by a weak order 
$\succeq$ on $\N$, with possibly some players tied. 
Second, the auxiliary data the rule is allowed to
consult, such as the tournament's game record, the voters' ballots, the
characteristic function of the cooperative game, or the graph encoding the network. 
This auxiliary data takes wildly different mathematical forms across applications. 
A chess tournament's  pairing-and-result
record is structurally nothing like a voting profile or a
characteristic function or a graph. 
The framework must accommodate all of them at once.

We adopt the most economical solution possible. 
The auxiliary data is modeled as an abstract set $\I$, equipped with one and only one
piece of structure: 
an action of  the symmetric group $\Sym(\N)$.
The action records what it means to relabel players, which is the
single ingredient our results require. 
No further structure (algebraic operations, distances, topology) is assumed, 
and as we shall see, none is needed. 
This minimalism is what makes the framework apply uniformly to such diverse settings.

\begin{definition}[Information Space]
\label{def:info-space}
An \emph{information space} on $\N$  is a $\Sym(\N)$-set, i.e., 
 a set $\I$
together with an action
\[
   \Sym(\N) \times \I \to \I,
   \qquad
   (\sigma, h) \mapsto \sigma \cdot h.
\]
\end{definition}

The information space is structure-free in the sense that no further
algebraic, topological, or order-theoretic structure is assumed
beyond the group action. 
The single requirement is that whatever
auxiliary data the practitioner has on $\N$, this data must be
relabelable when players are relabeled, and the relabeling must
respect group composition. 
This is a remarkably mild requirement,
and we shall see in Section~\ref{subsec:examples} that it is satisfied by
every concrete tie-breaking input we are aware of.

\begin{definition}[Input]
\label{def:input}
An \emph{input} is an element of $\I \times \Wk(\N)$. 
The diagonal $\Sym(\N)$-action on inputs is
\[
   \sigma \cdot (h, \succeq)
   :=
   (\sigma \cdot h, \sigma \cdot \succeq).
\]
\end{definition}

Intuitively, an input is the full data the tie-breaking rule has to work with: 
an information item $h$ (the auxiliary data, that is, the
tournament's games, the voters' ballots, the characteristic
function, etc.) and a coarse weak order $\succeq$ (the standings to be refined). 
The diagonal action says that when we relabel the players,
both the auxiliary data and the standings get relabeled together.

We fix an information space $\I$ throughout the paper.

\subsection{Joint Stabilizer}

A relabeling $\sigma \in \Sym(\N)$ may or may not change a given input. 
If $\sigma$ leaves both the auxiliary data $h$ and the
standings $\succeq$ unchanged, then from the input's standpoint, the
permutation $\sigma$ is invisible. 
The input ``looks the same'' under relabeling by $\sigma$. 
Such permutations capture the input's
intrinsic symmetries; in fact, they form the central object of the framework.

\begin{definition}[Joint Stabilizer]
\label{def:joint-stab}
For an input $(h, \succeq)$, the \emph{joint stabilizer} is
\[
   \Stab(h, \succeq)
   :=
   \{\sigma \in \Sym(\N)
   \mid
   \sigma \cdot h = h
   \text{ and }
   \sigma \cdot \succeq \,=\, \succeq   \}.
\]
\end{definition}

It is straightforward to verify that $\Stab(h, \succeq)$ is a
subgroup of the symmetric group $\Sym(\N)$. 
Intuitively,
$\sigma \in \Stab(h, \succeq)$ is a relabeling of the players under
which the input is indistinguishable from itself: both $h$ and
$\succeq$ look exactly the same after applying $\sigma$.

The intuition deserves a further word. 
Indeed, if a non-trivial $\sigma \in \Stab(h, \succeq)$ exchanges two players $x$ and $y$,
then the input contains no asymmetric information distinguishing $x$ from $y$. 
Whatever can be said of $x$ in the input's vocabulary can
equally be said of $y$, and vice versa. 
In other words,  $x$ and $y$ are indistinguishable from the input's standpoint. 
As we shall see, this is exactly the kind of indistinguishability that obstructs
strict tie-breaking and that the canonical refinement we construct
will respect.

The following structural lemma is the technical workhorse of the paper. 
It establishes that any symmetry of $\succeq$ preserves each
indifference class as a set, which in particular implies that any
joint symmetry of $(h, \succeq)$ does as well.

\begin{lemma}[Stabilizers Preserve Indifference Classes]
\label{lem:stab-preserves-classes}
Let ${\succeq} \in \Wk(\N)$ and $\sigma \in \Sym(\N)$. 
If $\sigma \cdot \succeq \,=\, \succeq$, then $\sigma(E) = E$ for every
indifference class $E$ of $\succeq$.
\end{lemma}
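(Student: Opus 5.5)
The plan is to exploit the fact that the indifference classes of a weak order are linearly ordered by $\succeq$, and that $\sigma$ induces an order-preserving bijection of this finite chain, which must therefore be the identity. First, by the equivariance identity $\Part(\sigma\cdot{\succeq}) = \sigma\cdot\Part({\succeq})$ from Section~\ref{subsec:Sym-actions}, the hypothesis $\sigma\cdot{\succeq}={\succeq}$ gives $\sigma\cdot\Part({\succeq})=\Part({\succeq})$. Hence $\sigma$ permutes the indifference classes: for each class $E$, the image $\sigma(E)$ is again an indifference class.

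Next, order the classes $E_1,\dots,E_k$ so that $E_i$ lies strictly above $E_j$ whenever $i<j$. This is possible because $\succeq$ is total and transitive, and it induces a strict linear order on $\Part({\succeq})$. I will check that the induced permutation $\pi$ of $\{1,\dots,k\}$, defined by $\sigma(E_i)=E_{\pi(i)}$, is order-preserving. Take $x\in E_i$ and $y\in E_j$ with $i<j$, so $x\succeq y$ and not $y\succeq x$. Invariance, read through the definition $x\,(\sigma\cdot{\succeq})\,y \iff \sigma^{-1}(x)\succeq\sigma^{-1}(y)$ and applied to $\sigma(x),\sigma(y)$, gives $\sigma(x)\succeq\sigma(y)$ and not $\sigma(y)\succeq\sigma(x)$. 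So $\sigma$ preserves strict preference, and therefore $\pi(i)<\pi(j)$.

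Finally, an order-preserving bijection of the finite chain $\{1,\dots,k\}$ is the identity. The quickest argument is induction from the top: $\pi(1)$ must be the minimum of the image, namely $1$, and so on down the chain. Alternatively, one can note that $\sigma$ preserves $|\{y : y\succ x\}|$, which determines the class of $x$. Either way $\sigma(E_i)=E_i$ for all $i$.

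The main obstacle is only bookkeeping: using the inverse correctly in the definition of the action when deducing that $\sigma$ preserves $\succeq$ rather than $\sigma^{-1}$. In fact both follow, since $\sigma^{-1}$ also stabilizes $\succeq$. The counting-invariant argument is the one I would write, because it avoids indexing the classes.
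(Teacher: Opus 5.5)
Your proof is correct and follows the paper's argument step for step: equivariance of $\Part$ makes $\sigma$ permute the indifference classes, preservation of strict preference makes the induced permutation $\pi$ of $\{1,\dots,k\}$ order-preserving, and so $\pi$ is the identity (you also justify this last fact, which the paper only asserts). The counting variant you prefer is also sound and slightly leaner: $x \succ z$ implies $\{y : y \succ x\} \cup \{x\} \subseteq \{y : y \succ z\}$, so the count is constant on each class and strictly increases down the chain of classes, which means it needs neither the indexing nor the equivariance of $\Part$, only injectivity of $\sigma$ on the finite set $\N$ to upgrade $\sigma(E) \subseteq E$ to equality.
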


\begin{proof}
Let $E_1, \ldots, E_k$ be the indifference classes of $\succeq$,
ordered by the strict-class relation $\succeq$ induces on classes,
with $E_i \succ E_j$ for $i < j$ (this strict order is indeed a linear order on
classes since $\succeq$ is total).
Since $\Part$ is equivariant and
$\sigma \cdot \succeq \,=\, \succeq$, we have
$\Part(\succeq) = \Part(\sigma \cdot \succeq)
= \sigma \cdot \Part(\succeq)$,
i.e., $\sigma$ permutes the classes $E_1, \ldots, E_k$ among themselves. 
Let $\pi$ be the induced permutation of indices, so
$\sigma(E_i) = E_{\pi(i)}$.

We now show $\pi$ is the identity. 
Pick $x \in E_i$ and $y \in E_j$
with $i < j$, so $x \succ y$. 
Apply $\sigma$ to both sides. 
By the definition of the action,
$\sigma(x) \,(\sigma \cdot \succeq)\, \sigma(y)$. 
Since
$\sigma \cdot \succeq \,=\, \succeq$, we obtain
$\sigma(x) \succ \sigma(y)$. 
But $\sigma(x) \in E_{\pi(i)}$ and
$\sigma(y) \in E_{\pi(j)}$, so $\pi(i) < \pi(j)$. 
Thus, $\pi$ is an
order-preserving permutation of $\{1, \ldots, k\}$, hence the identity. 
Therefore, $\sigma(E_i) = E_i$ for all $i$.
\end{proof}

\begin{corollary}
\label{cor:orbits-in-classes}
For every input $(h, \succeq)$ and every $x \in \N$, the
$\Stab(h, \succeq)$-orbit of $x$ is contained in the indifference
class of $x$.
\end{corollary}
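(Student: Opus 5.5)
The plan is to reduce the corollary directly to Lemma~\ref{lem:stab-preserves-classes}. Fix an input $(h, \succeq)$ and $x \in \N$, and let $E$ be the indifference class of $x$ under $\succeq$. The orbit of $x$ under the joint stabilizer is the set $\{\sigma(x) \mid \sigma \in \Stab(h, \succeq)\}$. So it suffices to show that $\sigma(x) \in E$ for every $\sigma \in \Stab(h, \succeq)$.

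Take an arbitrary $\sigma \in \Stab(h, \succeq)$. By Definition~\ref{def:joint-stab}, $\sigma$ satisfies both $\sigma \cdot h = h$ and $\sigma \cdot \succeq \,=\, \succeq$; we only use the second condition. Lemma~\ref{lem:stab-preserves-classes} then gives $\sigma(E) = E$. Since $x \in E$, we get $\sigma(x) \in \sigma(E) = E$. As $\sigma$ was arbitrary, the whole orbit lies in $E$.

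There is no real obstacle here; all the substantive work (that $\sigma$ cannot permute classes nontrivially, since it preserves the strict order among them) was done in Lemma~\ref{lem:stab-preserves-classes}. The only point worth noting is that $\Stab(h, \succeq)$ is contained in the stabilizer of $\succeq$ alone. Hence the conclusion holds regardless of the information item $h$, and in fact already for the stabilizer of $\succeq$ by itself.
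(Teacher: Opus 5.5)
Your proposal is correct and matches the paper's proof: every $\sigma \in \Stab(h, \succeq)$ fixes $\succeq$, so Lemma~\ref{lem:stab-preserves-classes} gives $\sigma(E) = E$, and hence $\sigma(x) \in E$. Your side remark is also accurate: the argument never uses $h$, so the conclusion already holds for the stabilizer of $\succeq$ alone.
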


\begin{proof}
Every $\sigma \in \Stab(h, \succeq)$ satisfies
$\sigma \cdot \succeq \,=\, \succeq$, and hence, by
Lemma~\ref{lem:stab-preserves-classes}, it preserves each indifference
class as a set. 
So if $x$ lies in indifference class $E$, then
$\sigma(x) \in E$ for all $\sigma \in \Stab(h, \succeq)$, and 
thus the orbit of $x$ is contained in $E$.
\end{proof}

The corollary shows that the $\Stab(h, \succeq)$-orbits on $\N$
refine the indifference classes of $\succeq$, and so they form a
partition of $\N$ that records exactly which players are
interchangeable under some intrinsic symmetry of the input. 
We shall denote this partition by $\Omega(h, \succeq)$ and call it the
\emph{orbit partition} of the input. 
Two players $x$ and $y$ lie in
the same block of $\Omega(h, \succeq)$ if and only if some
$\sigma \in \Stab(h, \succeq)$ satisfies $\sigma(x) = y$, that is, 
some intrinsic symmetry of $(h, \succeq)$ exchanges them.
The orbit partition is the central object of the partition framework
developed in Section~\ref{sec:partition}, where the formal definition
is given and where the characterization theorem
(Theorem~\ref{thm:characterization}) singles it out as the unique
honest partition-valued tie-breaking rule; 
we already introduce it informally here so that 
the worked examples below can meaningfully refer to it.

\subsection{Examples}
\label{subsec:examples}

We now illustrate the framework with five worked 
examples drawn from diverse application areas. 
The aim is to demonstrate that the
abstract framework genuinely captures concrete tie-breaking
situations across very different mathematical settings.

\subsubsection*{Example 1: Chess Swiss-System Tournament}

Let $\N$ be the set of players in a chess Swiss-system tournament with $r$ rounds. 
After all rounds, each player has a primary score
(games won, with draws counting as half). 
Group players by primary score; 
players within a group are tied at the level of $\succeq$.
The standings $\succeq$ is the weak order in which players in
higher-scoring groups are strictly above players in lower-scoring
groups, and players within a group are tied.

The auxiliary information $h$ records the pairing-and-result data:
which player played which player in which round, with what color
and what outcome. 
Formally, we may take $h$ to be a function
\[
   h \colon \N \times \N \times \{1, \ldots, r\}
       \to
       \{\mathrm{n}, \mathrm{w}, \mathrm{d}, \mathrm{l}\}
       \times \{\mathrm{white}, \mathrm{black}, -\},
\]
where the first coordinate of the output records the result for the
first player ($\mathrm{n}$ for ``no game played,'' $\mathrm{w}$ for
``win,'' $\mathrm{d}$ for ``draw,'' $\mathrm{l}$ for ``loss''), and
the second coordinate records the color assignment (white, black,
or $-$ to indicate ``no game played'').
The space $\I$ consists of
those functions $h$ that are \emph{consistent} in the following
sense: if $h(x, y, t) = (\mathrm{w}, \mathrm{white})$, then
$h(y, x, t) = (\mathrm{l}, \mathrm{black})$, and similarly for the
other paired cases (loss/win with reversed colors, draw/draw with
reversed colors, no-game/no-game). 
Concretely, the value of $h(x, y, t)$ records the outcome and color for player $x$ in the
game played between $x$ and $y$ in round $t$. 
If players $x$ and $y$ were not paired in round $t$, then $h(x, y, t) = (\mathrm{n}, -)$.
The space $\I$ is a $\Sym(\N)$-set under the natural relabeling action. 
For $\sigma \in \Sym(\N)$, we have 
$(\sigma \cdot h)(x, y, t) := h(\sigma^{-1}(x), \sigma^{-1}(y), t)$.
The consistency condition is preserved by the action.

A symmetry of $(h, \succeq)$ is a relabeling $\sigma \in \Sym(\N)$
that preserves both the score-group structure and the entire game record. 
Two tied players $x, y$ are exchanged by such a
$\sigma$ (i.e., $\sigma(x) = y$ and $\sigma(y) = x$) when there
exists a permutation of the entire player set, restricting to a
swap on $\{x, y\}$, under which the whole game record is invariant.
Such a permutation may also act non-trivially on other players. 
The orbit partition $\Omega(h, \succeq)$ records exactly this
indistinguishability structure. 
Two tied players land in the same
block of $\Omega$ if and only if some such $\sigma \in
\Stab(h, \succeq)$ exchanges them.

This example is a prototype for the entire family of Swiss-system
tie-breakers (Buchholz, Sonneborn--Berger, Median, etc.). 
Each such tie-breaker is a function of $h$ alone, designed to extract some
distinguishing feature when one exists. 
The framework's perspective
is that no such function can distinguish two players whose game
histories are symmetric to each other, and the orbit partition is
the canonical record of which distinctions are honestly available.

\subsubsection*{Example 2: Round-Robin Tournament}

Let $\N$ be the set of teams in a single round-robin tournament,
in which each pair of teams plays exactly once. 
The standings $\succeq$ is given by points (typical scoring: 3 for a win, 1 for a
draw, 0 for a loss). 
Two teams with the same points are tied in
$\succeq$.

The auxiliary information $h$ records the full match-result matrix,
including goal differences. 
We take $h$ to be a function
\[
   h \colon \N \times \N
       \to
       \mathbb{Z} \cup \{\bot\},
\]
where, for distinct $x, y \in \N$, the value $h(x, y) \in
\mathbb{Z}$ is the goal difference for player $x$ in the match
between $x$ and $y$ (goals scored by $x$ minus goals conceded), and
the diagonal value $h(x, x) := \bot$ is a designated symbol
indicating ``not applicable'' (a player does not play against itself). 
The order of arguments $(x, y)$ is a bookkeeping
convention indicating from whose perspective the goal difference is
measured; it does not encode home or away. 
The same match between
distinct teams $x$ and $y$ is recorded twice, once as $h(x, y)$ and
once as $h(y, x)$, with the consistency condition
$ h(x, y) = -h(y, x) $ for all distinct $x, y \in \N.$
The space $\I$ consists of all such consistent functions, and is a
$\Sym(\N)$-set under simultaneous row-and-column relabeling.

Let us illustrate with the classical ``rock-paper-scissors''
configuration, which exhibits a three-way intrinsic symmetry that
no goal-difference-style tie-breaker can resolve. 
Suppose $\N$ contains at least three teams $a, b, c$, and consider an input in
which:
\begin{itemize}[leftmargin=2em]
\item the standings $\succeq$ place $a, b, c$ in a common top
indifference class, with all other teams strictly below;
\item among $a, b, c$, the head-to-head results form a cycle:
$a$ beats $b$ by some goal-difference margin $d > 0$, $b$
beats $c$ by $d$, and $c$ beats $a$ by $d$, so
$h(a, b) = h(b, c) = h(c, a) = d$;
\item against every other team $z \in \N \setminus \{a, b, c\}$,
the three top teams have identical results:
$h(a, z) = h(b, z) = h(c, z)$.
\end{itemize}
Let $\rho \in \Sym(\N)$ be the cyclic permutation
$a \mapsto b \mapsto c \mapsto a$, fixing every player outside
$\{a, b, c\}$. We verify that $\rho \in \Stab(h, \succeq)$. 
The standings $\succeq$ are fixed by $\rho$ because $\rho$ permutes
$\{a, b, c\}$ within their common indifference class and fixes
every other player. 
The matrix $h$ is fixed by $\rho$ because each
match result is mapped to another match result with the same goal
difference: among $\{a, b, c\}$, the entries $h(a, b)$, $h(b, c)$,
$h(c, a)$ are permuted cyclically and all equal $d$, and the
entries $h(b, a)$, $h(c, b)$, $h(a, c)$ are permuted cyclically and
all equal $-d$ by consistency; the entries involving an outside
team $z$ are permuted among the three indistinguishable values
$h(a, z) = h(b, z) = h(c, z)$ and their negatives.

Since $\rho$ acts cyclically on $\{a, b, c\}$ with no fixed point,
the orbit partition $\Omega(h, \succeq)$ has $\{a, b, c\}$ as a single block.
 No tie-breaker computed from $h$ alone can
distinguish the three top teams, since any such tie-breaker is
invariant under $\rho$ and therefore assigns $a, b, c$ the same value. 
Real-world football regulations, when they encounter such
configurations, fall back on coin flips, lots, or a pre-tournament
draw position, which is to say on data outside the match record
itself.

\subsubsection*{Example 3: Voting and the Condorcet Cycle}

Let $\N$ be the set of alternatives in a voting setting (so $\N$ is the set of candidates). 
The standings $\succeq$ is the output of
some primary aggregation procedure (Borda, Copeland, etc.) applied
to the voters' ballots. 
When the primary procedure produces ties,
the practical task is to refine $\succeq$.

The auxiliary information $h$ is the voter profile (anonymous
version): a multiset of weak orders on $\N$, one for each voter.
Formally, $\I$ is the set of functions
$\Wk(\N) \to \mathbb{Z}_{\geq 0}$ with finite support, the
``histogram'' interpretation of profiles. The $\Sym(\N)$-action
relabels candidates (not voters, who are anonymous): for
$\sigma \in \Sym(\N)$,
$(\sigma \cdot h)({\succeq'}) := h(\sigma^{-1} \cdot {\succeq'})$
for each weak order ${\succeq'}$.

The Condorcet cycle is in fact the canonical symmetric input. 
Take $\N = \{a, b, c\}$ and three voters, each contributing one ballot:
\begin{itemize}[leftmargin=2em]
\item one voter votes $a \succ b \succ c$,
\item one voter votes $b \succ c \succ a$,
\item one voter votes $c \succ a \succ b$.
\end{itemize}
The cyclic permutation $\rho = (a\,b\,c)$ maps the above profile to
itself, since each ballot type is permuted to another ballot type also
present in the histogram, with multiplicities preserved. 
So $\rho \in \Stab(h, \succeq_\top)$, where $\succeq_\top$ is any
candidate-tied weak order, in particular the all-tied one. 
Since $\rho$ has no fixed point in $\{a, b, c\}$, the orbit partition is
$\Omega(h, \succeq_\top) = \{\{a, b, c\}\}$, so all three candidates
collapse into a single block. 
There is no honest way to break the
tie within $\{a, b, c\}$ from the profile data alone.

\subsubsection*{Example 4: Cooperative Game with Symmetric Players}

This example shows that the framework's notion of intrinsic
symmetry recovers the classical symmetry axiom of cooperative game theory~\cite{shapley1953}. 
Let $\N$ be the set of players in a cooperative game with
characteristic function $v \colon 2^\N \to \mathbb{R}$. 
The standings $\succeq$ might be the order on Shapley values, or on
some other power index. 
When two players have equal Shapley values,
they are tied in $\succeq$.

The auxiliary information is $h := v$ itself, viewed as an element
of $\I = \mathbb{R}^{2^\N}$. The $\Sym(\N)$-action is
$(\sigma \cdot v)(S) := v(\sigma^{-1}(S))$ for each coalition $S$.

Two players $i, j$ are called \emph{symmetric in $v$} if
$v(S \cup \{i\}) = v(S \cup \{j\})$ for every coalition $S$ not
containing either. 
This standard cooperative-game-theoretic notion
turns out to be exactly the condition that the transposition $(i\,j)$ fixes $v$. 
Hence symmetric players in $v$ are members of the same $\Stab(v)$-orbit. 
When $\succeq$ is the Shapley-value order, the
classical ``symmetry axiom'' (symmetric players have equal Shapley
values) ensures that symmetric players are also tied in $\succeq$,
and the orbit partition $\Omega(v, \succeq)$ groups symmetric
players into the same block. 
No honest function of $v$ alone can distinguish them.
The orbit partition thus extends the classical
pairwise notion of symmetric players to a canonical
partition-valued tie-breaking rule, applicable across all
cooperative games at once.

\subsubsection*{Example 5: Network Centrality on a Symmetric Graph}

Suppose we want to rank the vertices of a network by their structural importance. 
This task arises in many real settings:
ranking influence in social networks, identifying critical nodes
for the security or robustness of an infrastructure system,
ordering scientific papers by citation-based importance, or
producing an authority ranking for web pages. 
The standard approach
is to compute a \emph{centrality measure} on the graph, by which we
mean a function that assigns each vertex a real number expressing
how central it is. 
Common centrality measures include degree (the
number of neighbors), closeness (the reciprocal of the average
distance to other vertices), betweenness (the frequency with which
a vertex lies on shortest paths between other pairs of vertices),
and eigenvector centrality (a vertex is important if it is
connected to important vertices, the principle behind PageRank).
Each centrality measure induces a weak order on the vertices, and
tied vertices need a tie-breaker.

Take an undirected simple graph $G$ on the vertex set $\N$. 
The standings $\succeq$ is the weak order induced by some
centrality measure on $G$. 
The auxiliary information is the graph
itself, $h := G \in \I$, where $\I$ is the set of undirected simple
graphs on $\N$. 
The $\Sym(\N)$-action is the natural relabeling
action; that is, $\sigma \cdot G$ is the graph obtained from $G$ by
relabeling each vertex $x$ as $\sigma(x)$. 
The stabilizer $\Stab(G)$ of $G$ alone is the automorphism group $\mathrm{Aut}(G)$
in the classical graph-theoretic sense.

A centrality measure is \emph{permutation-invariant} if the
centrality value of a vertex depends only on the structural role of
that vertex in the graph and not on its label. 
Equivalently, for every $\sigma \in \Sym(\N)$, the centrality of $\sigma(x)$ in
$\sigma \cdot G$ equals the centrality of $x$ in $G$. 
All standard centrality measures (degree, closeness, betweenness, eigenvector
centrality, and so on) are permutation-invariant. 
For any such measure, every graph automorphism preserves the centrality-induced
ranking, since automorphisms map each vertex to a vertex of equal centrality. 
Hence, $\Stab(G, \succeq) = \mathrm{Aut}(G)$, and the
orbit partition $\Omega(G, \succeq)$ is the decomposition of $\N$
into automorphism orbits of $G$. 
Two vertices in the same orbit
are indistinguishable by any permutation-invariant centrality
measure.

The Petersen graph~\cite{petersen1898} is a canonical example. 
It is a $3$-regular graph on ten vertices whose
 automorphism group acts transitively on the vertex set. 
Hence, $\Omega(G, \succeq)$ has a single block
containing all ten vertices, regardless of which
permutation-invariant centrality measure is used to define $\succeq$. 
There is no honest way to break ties among the ten vertices.

This example shows that the classical graph-theoretic notion of
automorphism group is recovered as a special case of the joint
stabilizer $\Stab$, and that $\Omega$ extends the classical orbit
decomposition of vertices to a partition-valued tie-breaking rule
applicable across all permutation-invariant centrality measures
at once.

\subsubsection*{Common Structure}

These five examples come from very different mathematical settings:
discrete game records, integer-valued result matrices, weak-order
histograms, real-valued set functions, and graphs. 
Yet each is naturally a $\Sym(\N)$-set, each yields a meaningful joint
stabilizer, and in each the orbit partition $\Omega$ corresponds to
a concrete and intuitive notion of ``intrinsically indistinguishable.'' 
The framework's strength is precisely that
this common structure is captured by a single formalism, and that
the three theorems we prove below apply uniformly to all of them.

\section{The Impossibility of Strict Anonymous Tie-Breaking}
\label{sec:imp}

What does the impossibility theorem that we are about to prove actually say, in plain terms? 
It says that any tie-breaking system aiming for
a strict, definitive ranking must, on at least some inputs, reach
outside the data it has been given (such as coin flips, lotteries,
alphabetical conventions, fixed seedings).
 In any honest accounting, a strict tie-breaker eventually appeals to something of this kind.
There is no way to avoid the appeal by being clever about how the
rule is constructed.

Folk wisdom about tie-breaking has long held that any cascade of
tie-breaking criteria must, in the end, bottom out in something arbitrary. 
Theorem~A converts this folk wisdom into a rigorous,
unconditional theorem. 
Indeed, the conclusion is not asymptotic, not
worst-case, and not parametric.
 A single symmetric input anywhere in the
input space is enough to force the impossibility, and no scheme,
however ingenious, evades it on that input.

We now turn to the proof. 
The argument is short and turns on the
inability of a non-trivial permutation to fix a linear order.

\subsection{Strict Tie-Breaking Rules and Anonymity}

The most natural goal for a tie-breaking rule is, at the same time, the most ambitious. 
We aim to produce a definitive, complete ranking with no remaining
ties, and we call such a rule \emph{strict}. 
The strictness requirement is the source of the impossibility, since, as we shall see,
demanding a strict output forces the rule to make distinctions that the input
itself cannot justify.

A strict rule must respect two natural compatibility conditions.
The first is \emph{refinement}, meaning the rule cannot reverse strict
comparisons that
$\succeq$ already settles, i.e., 
if $\succeq$ ranks $x$ above $y$, so must the rule too.
The second is \emph{anonymity}, meaning the rule treats
players uniformly, so that relabeling the
 input simply relabels the output. 
We define both formally and then turn to the 
central technical notion of a
symmetric input.

\begin{definition}[Strict Tie-Breaking Rule]
\label{def:strict-rule}
A \emph{strict tie-breaking rule} is a function
\[
   T \colon \I \times \Wk(\N) \to \Lin(\N)
\]
such that, for every input $(h, \succeq)$, the linear order
$T(h, \succeq)$ refines $\succeq$ as a weak order, namely, for every
$x, y \in \N$ with $x \succ y$ strictly in $\succeq$ it holds that 
$x >_{T(h, \succeq)} y$.
\end{definition}

The refinement condition says the rule cannot reverse comparisons
already settled by $\succeq$. 
It can only make further distinctions
within the indifference classes of $\succeq$.

\begin{definition}[Anonymity]
\label{def:anonymity}
A strict rule $T$ is \emph{anonymous} if for every
$\sigma \in \Sym(\N)$ and every input $(h, \succeq)$ 
it holds that 
$   T(\sigma \cdot h, \sigma \cdot \succeq)
   =
   \sigma \cdot T(h, \succeq).$
\end{definition}

Anonymity is the natural symmetry property. 
Relabeling the input
relabels the output accordingly. 
The rule does not depend on player
identities beyond their roles in $h$ and $\succeq$.

\subsection{Symmetric Inputs}
\label{subsec:symmetric-inputs}

In this subsection we identify the mathematical phenomenon underlying the impossibility. 
The intuitive culprit is clear: when the input cannot
distinguish two tied players, no rule can honestly distinguish them either. 
The technical formulation packages this intuition as the
condition that some intrinsic symmetry of the input exchanges two
players in a common indifference class.

The conceptual content is the following. 
A tied pair of players $x, y$ is potentially honorable to distinguish only if the input
contains some asymmetry between them. 
If, on the other hand, the input is symmetric in $x$ and $y$ (some joint symmetry of $(h,
\succeq)$ exchanges them), there is no information available to
the rule that could justify ranking one above the other. 
We call such an input symmetric. 
Symmetric inputs are the fundamental  obstruction to
strict tie-breaking, and as we shall see (Proposition~\ref{prop:uniformity}
and the discussion that follows), they exist in essentially every
realistic application of the framework.

\begin{definition}[Symmetric Input]
\label{def:symmetric-input}
An input $(h, \succeq)$ is \emph{symmetric} if there exist distinct
$x, y \in \N$ and a permutation $\sigma \in \Stab(h, \succeq)$ with
$\sigma(x) = y$.
\end{definition}

Equivalently, an input is symmetric if and only if the action of
$\Stab(h, \succeq)$ on $\N$ has a non-singleton orbit. 
Note that, by Corollary~\ref{cor:orbits-in-classes}, two such players $x, y$ are
automatically in the same indifference class of $\succeq$.

A symmetric input is one in which two distinct players are intrinsically indistinguishable. 
They are tied in $\succeq$, and a
joint symmetry of $(h, \succeq)$ exchanges them. 
Within $(h, \succeq)$ itself no basis exists for ranking one above the other.

We now identify a clean structural condition on the information
space $\I$ that guarantees the existence of a symmetric input. 
The condition is the existence of a single element of $\I$ that is
invariant under every relabeling of the players. 
We will call such information spaces \emph{uniform}.

\begin{definition}[Uniform Information Space]
\label{def:uniform}
The information space $\I$ is \emph{uniform} if it contains an
element $h_*$ that is fixed by every $\sigma \in \Sym(\N)$, that is,
$\sigma \cdot h_* = h_*$ for all $\sigma \in \Sym(\N)$.
\end{definition}

In group-theoretic language, $\I$ is uniform precisely when the
$\Sym(\N)$-action on $\I$ has a global fixed point.

\begin{proposition}[Uniform Information Spaces Admit Symmetric Inputs]
\label{prop:uniformity}
If $\I$ is uniform and $|\N| \geq 2$, then $\I \times \Wk(\N)$
contains a symmetric input.
\end{proposition}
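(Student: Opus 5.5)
The plan is to exhibit an explicit symmetric input. I would take the global fixed point $h_*$ provided by uniformity together with the trivial weak order $\succeq_\top$, in which all players are tied ($x \succeq_\top y$ for all $x,y \in \N$). I will argue that the input $(h_*, \succeq_\top)$ is symmetric in the sense of Definition~\ref{def:symmetric-input}.

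The first step is to show that $\Stab(h_*, \succeq_\top) = \Sym(\N)$. By Definition~\ref{def:uniform}, $\sigma \cdot h_* = h_*$ for every $\sigma \in \Sym(\N)$, so only invariance of $\succeq_\top$ needs checking. For any $\sigma$ and any $x, y$, the defining condition $x \,(\sigma\cdot\succeq_\top)\, y \iff \sigma^{-1}(x) \succeq_\top \sigma^{-1}(y)$ has a right-hand side that is always true. Hence $\sigma \cdot \succeq_\top$ is again the all-tied relation, that is, $\sigma\cdot\succeq_\top \,=\, \succeq_\top$. Both conditions in Definition~\ref{def:joint-stab} therefore hold for every $\sigma$.

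The second step uses the hypothesis $|\N| \geq 2$ to pick distinct $x, y \in \N$. Let $\sigma$ be the transposition $(x\,y)$. It lies in $\Sym(\N) = \Stab(h_*, \succeq_\top)$ and satisfies $\sigma(x) = y$. This is exactly the witness Definition~\ref{def:symmetric-input} requires. As a consistency check, $x$ and $y$ are indeed tied in $\succeq_\top$, in line with Corollary~\ref{cor:orbits-in-classes}.

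There is no real obstacle here. The only point needing care is the direction of the action on weak orders, specifically the use of $\sigma^{-1}$ in the definition, when verifying that $\succeq_\top$ is fixed. Since the all-tied relation is the full relation $\N\times\N$, it is invariant under any bijection, so the direction does not matter.
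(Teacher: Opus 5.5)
Your proof is correct and matches the paper's argument essentially step for step: it pairs the global fixed point $h_*$ with the all-tied weak order $\succeq_\top$, checks $\tau \cdot \succeq_\top = \succeq_\top$ directly from the definition of the action, and uses the transposition $(x\,y)$ as the witness. Your remark that in fact $\Stab(h_*, \succeq_\top) = \Sym(\N)$ is only a slight strengthening of what the paper records explicitly.
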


\begin{proof}
Let $h_* \in \I$ be a global fixed point of the $\Sym(\N)$-action,
so $\sigma \cdot h_* = h_*$ for every $\sigma \in \Sym(\N)$. 
Pick any two distinct $x, y \in \N$ (possible since $|\N| \geq 2$), and
let $\sigma = (x\,y)$ be the transposition swapping $x$ and $y$.

Take $\succeq_\top$ to be the all-tied weak order on $\N$ (one
indifference class). For every $\tau \in \Sym(\N)$ and every
$x', y' \in \N$, we have $x' \,(\tau \cdot \succeq_\top)\, y'$ 
if and only if
$\tau^{-1}(x') \succeq_\top \tau^{-1}(y')$, which always holds since
$\succeq_\top$ is the all-tied weak order. 
So $\tau \cdot \succeq_\top \,=\, \succeq_\top$ for every $\tau$, in
particular for $\sigma$. 
Combined with $\sigma \cdot h_* = h_*$,
this gives $\sigma \in \Stab(h_*, \succeq_\top)$.
Both $x$ and $y$ lie in the unique indifference class of
$\succeq_\top$, namely $\N$ itself. 
They are distinct, and $\sigma(x)= y$ by construction. 
It follows that  $(h_*, \succeq_\top)$ is a symmetric input.
\end{proof}

\begin{remark}[Verification for the Worked Examples]
\label{rem:five-examples-uniform}
Each of the five information spaces of
Subsection~\ref{subsec:examples} is uniform. 
We exhibit a global
fixed point $h_*$ in each.
\begin{itemize}[leftmargin=2em]
\item Example~1 (chess): $h_*$ is the empty tournament record,
$h_*(x, y, t) = (\mathrm{n}, -)$ for all $x, y, t$. 
No games played; relabeling the players does nothing.
\item Example~2 (round-robin): $h_*$ is the all-zeros record,
$h_*(x, y) = 0$ for all distinct $x, y$. 
Every match a draw with zero goal difference; 
every relabeling preserves this.
\item Example~3 (voting): $h_*$ is the zero histogram, $h_*(\succeq')
= 0$ for every $\succeq' \in \Wk(\N)$. 
No votes cast at all;
relabelings act on empty data.
\item Example~4 (cooperative game): $h_*$ is the zero
characteristic function, $v(S) = 0$ for every $S \subseteq \N$. 
Every coalition is worth nothing; symmetric in every sense.
\item Example~5 (network centrality): $h_*$ is the empty graph
$(\N, \emptyset)$, with no edges. 
Relabeling the vertices
preserves the absence of edges.
\end{itemize}
In each case, the fixed point is fixed by every $\sigma$ trivially:
the data records ``nothing'' (or, for Example~2, ``the same nothing
for every pair''), and there is no asymmetric content to disturb.
By Proposition~\ref{prop:uniformity}, the information space of
each example admits a symmetric input.
\end{remark}

\begin{remark}[The Meta-Claim of Uniformity in Realistic Applications]
\label{rem:meta-uniformity}
Proposition~\ref{prop:uniformity} reduces the question of whether
the impossibility theorem (proved in Section~\ref{sec:imp}) applies to a given concrete instance of
the framework to the question of whether the underlying information space is uniform. 
We have just verified that all five worked
examples have uniform information spaces. 
We now make a
\emph{meta-claim}, distinct in character from the formal
mathematical results of this paper.

\smallskip\noindent\textit{Meta-claim.} Essentially every
information space arising in a realistic application of the
framework is uniform.

\smallskip We emphasize that this is a meta-claim and not a
theorem. The notion ``realistic application'' is not a mathematical
predicate, so the meta-claim cannot be formalized, let alone
proved.

Note, however, that the meta-claim is well-supported by structural considerations. 
Indeed, a realistic information space is built from
canonical set-theoretic constructions on the player set $\N$: 
it records functions on $\N$ or on $\N^k$, subsets of $\N$, relations
on $\N$, multisets over data parameterized by $\N$, graphs on $\N$, and so on. 
Every such construction admits a degenerate or
``trivial'' element, namely a constant function, the empty subset,
the empty relation, the zero multiset, the empty graph. 
Such elements, by their very degeneracy, treat all players uniformly,
and are therefore fixed by every $\sigma \in \Sym(\N)$. 
They witness uniformity. 
The five worked examples all fit this pattern, 
and we are not aware of any naturally arising information
space that does not.

It is possible to construct artificial information spaces that
fail to be uniform, for instance by deliberately removing all
$\Sym(\N)$-fixed elements from a naturally constructed space. 
Such constructions are technical curiosities, not models of practical
tie-breaking situations.
In every realistic modeling context we
know of, uniformity is automatic.

We therefore conclude, with the appropriate qualification: the
impossibility theorem proved below applies, in practice, to
essentially every concrete instantiation of the framework. 
The formal scope of the theorem is governed by
Proposition~\ref{prop:uniformity}; the practical scope, by the
meta-claim above.
\end{remark}

\subsection{The Impossibility Theorem}

We are now ready for the first main result. 
The theorem states that the existence of even one symmetric input rules out the existence of
any anonymous strict tie-breaking rule. 
The argument is short and turns on a rather basic group-theoretic fact, 
namely, that a non-trivial permutation
cannot fix any linear order on $\N$. 
The geometric picture is that
asking a strict rule to be invariant under a non-trivial symmetry of
its input is asking it to be a fixed point of an action that has no
fixed points in the codomain.

\begin{theorem}[Impossibility of Strict Anonymous Tie-Breaking]
\label{thm:impossibility}
Let $\I$ be an information space on $\N$, and suppose
$\I \times \Wk(\N)$ contains a symmetric input. 
Then no anonymous
strict tie-breaking rule
$ \I \times \Wk(\N) \to \Lin(\N)$ exists.
\end{theorem}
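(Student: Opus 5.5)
The plan is a proof by contradiction. Suppose $T$ is an anonymous strict tie-breaking rule. Fix a symmetric input $(h,\succeq)$ as in Definition~\ref{def:symmetric-input}, with distinct $x,y\in\N$ and $\sigma\in\Stab(h,\succeq)$ satisfying $\sigma(x)=y$. Write $L:=T(h,\succeq)\in\Lin(\N)$. Since $\sigma\cdot h=h$ and $\sigma\cdot{\succeq}={\succeq}$, anonymity (Definition~\ref{def:anonymity}) gives
\[
L = T(h,\succeq) = T(\sigma\cdot h,\sigma\cdot\succeq) = \sigma\cdot L .
\]
So $L$ is fixed by $\sigma$, and by induction it is fixed by every power $\sigma^k$. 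Here the action of $\sigma$ on linear orders is the restriction of its action on weak orders, and $\sigma\cdot L$ is again linear.

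The key step is the group-theoretic fact that a linear order is fixed only by the identity permutation. The following is the cleanest way to use it. From $\sigma\cdot L=L$ and the definition of the action, $u \ge_L v$ holds if and only if $\sigma(u)\ge_L\sigma(v)$. Thus $\sigma$ is an order-automorphism of the finite chain $(\N,\ge_L)$.

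Next I show that such an automorphism is the identity. One route is to look at the $L$-maximum element $m$. Since $\sigma$ is a bijection preserving $\ge_L$, $\sigma(m)$ is again $L$-maximal, and antisymmetry forces $\sigma(m)=m$. Then I induct on $\N\setminus\{m\}$ with the restricted order. A shorter alternative avoids the induction. Since $\sigma(x)=y\neq x$, by linearity either $y>_L x$ or $x>_L y$. In the first case, applying $\sigma$ repeatedly gives the chain $x<_L\sigma(x)<_L\sigma^2(x)<_L\cdots$, and the second case gives the analogous decreasing chain. But $\sigma$ has finite order $n$, so $\sigma^n(x)=x$, and then $x<_L x$ (or $x>_L x$), which contradicts irreflexivity of the strict part. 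I will use this orbit-chain argument because it works directly with the given $x,y$.

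I do not expect a real obstacle. The only care needed is in checking the direction of the action, namely that $\sigma\cdot L=L$ really translates into $u>_L v\Rightarrow \sigma(u)>_L\sigma(v)$ and not the inverse statement. Both versions suffice, since $\sigma^{-1}$ also fixes $L$. The refinement condition and Corollary~\ref{cor:orbits-in-classes} are not needed for the contradiction, though one may remark that $x,y$ lie in a common indifference class, which explains why refinement cannot rescue the rule.
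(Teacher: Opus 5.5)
Your proposal is correct and essentially reproduces the paper's proof. Anonymity applied at the stabilizing $\sigma$ forces $\sigma \cdot L = L$, and the orbit chain $x, \sigma(x), \sigma^2(x), \ldots$ closing up at $\sigma^n(x) = x$ then yields $x >_L x$ or $x <_L x$, contradicting irreflexivity; phrasing $\sigma \cdot L = L$ as ``$\sigma$ is an order-automorphism of $(\N, \ge_L)$'' is only a tidier packaging of the paper's direct substitution step.
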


\begin{proof}
Let $(h_*, \succeq_*)$ be a symmetric input, witnessed by distinct
$x, y \in \N$ and $\sigma \in \Stab(h_*, \succeq_*)$ with
$\sigma(x) = y$. 
By contradiction, suppose that $T$ is an anonymous strict rule. 
Applying anonymity at $\sigma$, 
we obtain
\[
   T(\sigma \cdot h_*, \sigma \cdot \succeq_*)
   =
   \sigma \cdot T(h_*, \succeq_*).
\]
By the choice of $\sigma$ as a member of $\Stab(h_*, \succeq_*)$,
we have $\sigma \cdot h_* = h_*$ and
$\sigma \cdot \succeq_* \,=\, \succeq_*$. 
Substituting,
\[
   T(h_*, \succeq_*) = \sigma \cdot T(h_*, \succeq_*).
\]
Write $L = T(h_*, \succeq_*) \in \Lin(\N)$. 
The above identity
says $L = \sigma \cdot L$, that is, 
$L$ is fixed by $\sigma$.

We now show that no linear order can be fixed by $\sigma$. 
Since $\sigma(x) = y \neq x$, the linear order $L$ contains either
$x >_L y$ or $y >_L x$ (exactly one, by totality and antisymmetry).
Suppose first that $x >_L y$. 
By the definition of the action,
$a \, (\sigma \cdot L) \, b$ holds if and only if
$\sigma^{-1}(a) >_L \sigma^{-1}(b)$. 
Setting $a = \sigma(x)$ and
$b = \sigma(y)$, this gives
$\sigma(x) \,(\sigma \cdot L)\, \sigma(y)$
if and only if $x >_L y$,
which holds by assumption. 
Hence, $\sigma(x) >_{\sigma \cdot L} \sigma(y)$. 
Using $L = \sigma \cdot L$,
we conclude that $\sigma(x) >_L \sigma(y)$, i.e., 
$y >_L \sigma(y)$.
Iterating the same argument starting from $\sigma^k(x) >_L
\sigma^{k+1}(x)$ yields $\sigma^{k+1}(x) >_L \sigma^{k+2}(x)$ for each $k \geq 0$. 
The orbit of $x$ under $\langle \sigma \rangle$ is
finite, say of size $n$, so $\sigma^n(x) = x$. 
Chaining these strict comparisons, we get 
\[
   x \;>_L\; \sigma(x) \;>_L\; \sigma^2(x) \;>_L\; \cdots
   \;>_L\; \sigma^{n-1}(x) \;>_L\; \sigma^n(x) = x.
\]
By transitivity of $L$, we obtain $x >_L x$, contradicting
irreflexivity of the strict part of $L$.

The case $y >_L x$ is handled symmetrically, by starting from
$\sigma(x) >_L x$ and doing a similar iteration. 
\end{proof}

\begin{corollary}[Plain Statement]
\label{cor:plain}
There is no anonymous function $T \colon \I \times \Wk(\N) \to \Lin(\N)$
that takes every input to a refining linear order, whenever the
input space admits a symmetric input.
\end{corollary}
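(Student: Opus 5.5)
The corollary is a restatement of Theorem~\ref{thm:impossibility}, so the plan is to check that its hypotheses and conclusion match those of the theorem and then invoke it. An ``anonymous function $T \colon \I \times \Wk(\N) \to \Lin(\N)$ that takes every input to a refining linear order'' is precisely a strict tie-breaking rule in the sense of Definition~\ref{def:strict-rule}, since the refinement clause there is exactly the requirement that $T(h,\succeq)$ refine $\succeq$. It is also anonymous in the sense of Definition~\ref{def:anonymity}. The hypothesis that the input space admits a symmetric input is verbatim the hypothesis of Theorem~\ref{thm:impossibility}. So if such a $T$ existed, it would be an anonymous strict tie-breaking rule on an input space containing a symmetric input, contradicting the theorem.

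For the reader's benefit, I would recall the one-line mechanism in the remark after the proof. Take a symmetric input $(h_*,\succeq_*)$ with $\sigma\in\Stab(h_*,\succeq_*)$ and $\sigma(x)=y\neq x$. Anonymity forces $L=T(h_*,\succeq_*)$ to satisfy $\sigma\cdot L=L$. The nontrivial $\sigma$-orbit of $x$ then produces a strict cycle in $L$, which is impossible. Refinement is never used in this mechanism; it is included only to match the definition of a strict rule.

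I would also note, citing Proposition~\ref{prop:uniformity}, that the hypothesis holds whenever $\I$ is uniform, and hence in all five worked examples by Remark~\ref{rem:five-examples-uniform}.

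There is no real obstacle here. The only point needing care is that the informal phrase ``refining linear order'' is read with the same meaning as in Definition~\ref{def:strict-rule}: strict comparisons of $\succeq$ are preserved, and nothing is required within indifference classes.
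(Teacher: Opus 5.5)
Your proposal is correct and matches the paper, which gives no separate proof and treats this corollary as a verbatim restatement of Theorem~\ref{thm:impossibility}: an anonymous map into $\Lin(\N)$ that sends every input to a refining linear order is exactly an anonymous strict tie-breaking rule, so the theorem applies directly. Your side remark is also accurate: the refinement condition plays no role in the argument, which uses only the fact that $\sigma \cdot L = L$ produces a strict cycle along the $\sigma$-orbit of $x$.
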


Combining the impossibility theorem with the uniformity
machinery developed in Subsection~\ref{subsec:symmetric-inputs}
gives a clean structural statement.

\begin{corollary}[Uniform Information Spaces Admit No Anonymous Strict Rule]
\label{cor:uniform-impossibility}
If $|\N| \geq 2$ and $\I$ is uniform in the sense of
Definition~\ref{def:uniform}, then no anonymous strict
tie-breaking rule $T \colon \I \times \Wk(\N) \to \Lin(\N)$ exists.
\end{corollary}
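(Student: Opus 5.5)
This corollary should follow directly by chaining the two results already established, so the plan is a two-step composition rather than a new argument. First, I will invoke Proposition~\ref{prop:uniformity}. Its hypotheses are exactly ours: $\I$ is uniform and $|\N| \geq 2$. It therefore supplies a symmetric input in $\I \times \Wk(\N)$. Concretely, the witness is $(h_*, \succeq_\top)$, where $h_*$ is a global fixed point and $\succeq_\top$ is the all-tied weak order. For any two distinct players $x, y$, the transposition $(x\,y)$ lies in $\Stab(h_*, \succeq_\top)$ and sends $x$ to $y$.

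Second, I will feed this symmetric input into Theorem~\ref{thm:impossibility}. That theorem needs only the existence of one symmetric input in $\I \times \Wk(\N)$, and it concludes that no anonymous strict tie-breaking rule $\I \times \Wk(\N) \to \Lin(\N)$ exists. This is verbatim the conclusion of the corollary.

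No real obstacle is expected. The only point to check is that the notions match across the two results. The rule in the corollary must be strict in the sense of Definition~\ref{def:strict-rule} and anonymous in the sense of Definition~\ref{def:anonymity}, which are the notions Theorem~\ref{thm:impossibility} uses. The symmetric input must also lie in the full input space $\I \times \Wk(\N)$ on which the rule is defined, so the rule is actually evaluated there.

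If one wanted a self-contained sanity check, one could also argue directly. Anonymity at $\sigma = (x\,y)$ forces $L = T(h_*, \succeq_\top)$ to satisfy $L = \sigma \cdot L$. But $x >_L y$ then implies $y >_L x$, contradicting antisymmetry. The two-line citation, however, is the intended proof.
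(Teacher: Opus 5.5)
Your proposal is correct and is the same proof as the paper's: apply Proposition~\ref{prop:uniformity} to get the symmetric input $(h_*, \succeq_\top)$, then apply Theorem~\ref{thm:impossibility}. Your direct sanity check with the transposition $(x\,y)$ is also valid, and it is slightly simpler than the paper's general orbit-chaining argument because a transposition swaps $x$ and $y$ outright.
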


\begin{proof}
By Proposition~\ref{prop:uniformity}, $\I \times \Wk(\N)$ contains
a symmetric input. By Theorem~\ref{thm:impossibility}, no anonymous
strict tie-breaking rule on $\I \times \Wk(\N)$ exists.
\end{proof}

In particular, by
Remark~\ref{rem:five-examples-uniform}, all five worked examples
of Subsection~\ref{subsec:examples} have uniform information
spaces, so no anonymous strict tie-breaking rule on any of them exists. 
The meta-claim of Remark~\ref{rem:meta-uniformity} extends this conclusion, 
with appropriate qualification, to essentially every realistic instance
of the framework.

The proof shows that the obstruction is not subtle. 
A single symmetric input, anywhere in $\I \times \Wk(\N)$, suffices to
prevent the existence of any anonymous strict rule.

\begin{remark}[The Strength of the Result]
\label{rem:strength}
Theorem~\ref{thm:impossibility} is sharper than the parametric
ANR-impossibility (anonymity, neutrality, resolvability) results
in the social-choice literature in two respects. 
First, the conclusion (``no anonymous strict rule
exists'') holds unconditionally once a single symmetric input is
present; there is no possibility of escape via favorable arithmetic
on $|\N|$ or $|\I|$. 
Second, the argument is uniform across all
$\Sym(\N)$-sets $\I$, with no special structural assumptions on $\I$. 
This is what allows the same theorem to apply to chess
tournaments, voting, sports leagues, cooperative games, and graphs
all at once.
\end{remark}

The result also admits a useful taxonomy of escape strategies. 
The goal of fair, anonymous, deterministic, strict tie-breaking is
incoherent: the four desiderata cannot be jointly satisfied. 
Any tie-breaking method that produces a strict output, in any of the
realistic application areas, must give up at least one of these.
Either it tolerates ties (giving up strictness), or it depends on
player identity (giving up anonymity), or it uses external
randomization (giving up determinism), or it uses external data (broadening the input). 
Real-world systems make different choices among these escapes. 
Chess Swiss-system cascades give up determinism at the bottom of the cascade, 
drawing lots when the cascade is exhausted. 
Sports leagues give up determinism similarly,
or fall back to fair-play points and other expanded data. 
The Isner--Mahut rule change at Wimbledon gave up the ambition of
deciding every match on the strength of the match record alone,
accepting that the playoff at 12--12 would supply the winner once
the cap is reached. 
Voting tie-breakers vary across all four
escapes depending on the system. 
The framework we develop here
makes this taxonomy of escapes precise and gives a clean
structural account of which information is forced and which is
chosen.

\section{The Partition Framework and the Characterization Theorem}
\label{sec:partition}

The impossibility theorem clearly motivates a reformulation. 
If strict output is unattainable, the natural alternative is to allow the
rule to leave ties unbroken when the input cannot resolve them. 
We make this precise by replacing the codomain $\Lin(\N)$ with
$\PartN$: the rule outputs a partition of $\N$ recording which
players are ``still tied'' and which are ``separated.''

\subsection{Partition-Valued Tie-Breaking Rules}

The conceptual move from a strict to a partition-valued rule is the following. 
Instead of forcing the output to be a complete linear
order, with all ties broken, we allow the rule to declare some pairs
of players ``still tied.'' 
The output is then a partition of $\N$
into blocks of mutually-tied players, refining the indifference partition $\Part(\succeq)$. 
Concretely, each block of the output
records a set of players the rule cannot honestly separate, 
given the information available.

This reformulation is, in fact, not a retreat. 
Rather, it is a recognition that
the natural codomain for a tie-breaking rule, when one wants to record
exactly what the input determines and no more, is the lattice of
partitions of $\N$. 
As we shall see, in this codomain the
impossibility dissolves and is replaced by a clean uniqueness
characterization.

\begin{definition}[Partition-Valued Tie-Breaking Rule]
\label{def:rule-partition}
A \emph{partition-valued tie-breaking rule} is a function
$T \colon \I \times \Wk(\N) \to \PartN$ satisfying the refinement
constraint $T(h, \succeq) \trianglelefteq \Part(\succeq)$ for every
input.
\end{definition}

The refinement constraint says the rule never undoes existing
strict comparisons in $\succeq$. 
Players strictly compared by
$\succeq$ are in different blocks of $T(h, \succeq)$. 
Players tied
by $\succeq$ may end up tied or separated.

\begin{definition}[Anonymity for Partition-Valued Rules]
\label{def:anonymity-partition}
A partition-valued rule
\[
   T \colon \I \times \Wk(\N) \to \PartN
\]
is \emph{anonymous} if for every $\sigma \in \Sym(\N)$ and every input
$(h, \succeq)$ it holds that
\[
   T(\sigma \cdot h, \sigma \cdot \succeq)
   =
   \sigma \cdot T(h, \succeq).
\]
\end{definition}

This is exactly the analogue, for partition outputs, of
Definition~\ref{def:anonymity}. The action on the codomain is
the canonical $\Sym(\N)$-action on $\PartN$ introduced in
Subsection~\ref{subsec:Sym-actions}.

\subsection{The Orbit Partition}

We come now to the central object of the partition framework, namely  a
canonical partition of $\N$ extracted directly from the joint
stabilizer of an input. 
Recall that the joint stabilizer
$\Stab(h, \succeq)$ records the intrinsic symmetries of the input.
Such a group acts on $\N$, and its orbits give a partition of $\N$.
Two players land in the same orbit if and only if some intrinsic
symmetry of $(h, \succeq)$ exchanges them, that is, if and only if
the input contains no asymmetric information distinguishing them.

The orbit partition is, in this precise sense, the canonical record
of which players are intrinsically indistinguishable in the input.
It is the natural partition compatible with the input's symmetries.
Its blocks are exactly the equivalence classes of the relation ``$x$
and $y$ are exchangeable under some symmetry of $(h, \succeq)$,''
and the characterization theorem we prove below shows it is the
unique partition-valued rule satisfying two natural axioms.

\begin{definition}[Orbit Partition]
\label{def:orbit-partition}
For an input $(h, \succeq)$, the \emph{orbit partition}
$\Omega(h, \succeq)$ is the partition of $\N$ whose blocks are the
orbits of $\Stab(h, \succeq)$ acting on $\N$.
\end{definition}

It is a standard fact of group theory that orbits of a group action
on a set form a partition of that set, so $\Omega(h, \succeq)$ is
well-defined as an element of $\PartN$. 
Two players $x, y$ are in
the same block of $\Omega(h, \succeq)$ 
if and only if
 some intrinsic symmetry
of $(h, \succeq)$ exchanges them. 
The orbit partition records
exactly the equivalence relation ``$x$ and $y$ are
indistinguishable in $(h, \succeq)$.''

\begin{lemma}[Refinement of $\Part(\succeq)$]
\label{lem:Omega-refines}
$\Omega(h, \succeq) \trianglelefteq \Part(\succeq)$ for every input
$(h, \succeq)$.
\end{lemma}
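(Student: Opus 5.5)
The plan is to reduce the statement directly to Corollary~\ref{cor:orbits-in-classes}, which already does all the work. By the definition of refinement, we must show that every block of $\Omega(h, \succeq)$ lies inside some block of $\Part(\succeq)$. The blocks of $\Omega(h, \succeq)$ are, by Definition~\ref{def:orbit-partition}, the orbits of $\Stab(h, \succeq)$ acting on $\N$. The blocks of $\Part(\succeq)$ are the indifference classes of $\succeq$.

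The steps, in order, are as follows.
\begin{enumerate}
\item Fix an arbitrary block $B$ of $\Omega(h, \succeq)$. Since $B$ is nonempty, choose $x \in B$. Then $B$ is exactly the $\Stab(h, \succeq)$-orbit of $x$.
\item Let $E$ be the indifference class of $\succeq$ containing $x$, so $E \in \Part(\succeq)$.
\item Invoke Corollary~\ref{cor:orbits-in-classes} to conclude that the orbit of $x$ is contained in $E$, that is, $B \subseteq E$.
\end{enumerate}
Since $B$ was arbitrary, this gives $\Omega(h, \succeq) \trianglelefteq \Part(\succeq)$.

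Underlying this is Lemma~\ref{lem:stab-preserves-classes}: every $\sigma \in \Stab(h, \succeq)$ fixes $\succeq$ and hence maps each indifference class onto itself. So $\sigma(x) \in E$ for all such $\sigma$, and the orbit stays inside $E$.

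I expect no genuine obstacle. The only point needing care is the identification of the block $B$ with the orbit of any of its elements, which is the standard fact that orbits partition the set. The substantive content, that stabilizing permutations cannot permute distinct indifference classes, was already handled by the order-preservation argument in Lemma~\ref{lem:stab-preserves-classes}.
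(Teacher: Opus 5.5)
Your proposal is correct and matches the paper's proof essentially step for step. You fix a block $B$ of $\Omega(h,\succeq)$, identify it with the $\Stab(h,\succeq)$-orbit of some $x \in B$, and apply Corollary~\ref{cor:orbits-in-classes}; this places $B$ inside the indifference class of $x$, which is a block of $\Part(\succeq)$.
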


\begin{proof}
Let $B$ be a block of $\Omega(h, \succeq)$, i.e., $B$ is a
$\Stab(h, \succeq)$-orbit on $\N$. 
Pick any $x \in B$. 
By
Corollary~\ref{cor:orbits-in-classes}, the orbit of $x$ under
$\Stab(h, \succeq)$ is contained in the indifference class of $x$
under $\succeq$. 
So $B$ is contained in this indifference class,
which is a block of $\Part(\succeq)$. 
Hence, every block of
$\Omega(h, \succeq)$ is contained in a block of $\Part(\succeq)$,
which is the definition of refinement.
\end{proof}

The next lemma records the equivariance of $\Omega$ under the
diagonal $\Sym(\N)$-action on inputs. 
This is the property that, in
the language of the next section, says $\Omega$ is anonymous as a
partition-valued rule, and it is one of the two ingredients used in
the characterization theorem (Theorem~\ref{thm:characterization}) to
identify $\Omega$ uniquely.

\begin{lemma}[Equivariance of $\Omega$]
\label{lem:Omega-equivariant}
$\Omega(\sigma \cdot h, \sigma \cdot \succeq)
 = \sigma \cdot \Omega(h, \succeq)$
for every $\sigma \in \Sym(\N)$ and input $(h, \succeq)$.
\end{lemma}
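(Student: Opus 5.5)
The plan has two steps. First, establish the standard conjugation identity for stabilizers. Then transport orbits along $\sigma$.

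\textbf{Step 1 (conjugation).} We show
\[
   \Stab(\sigma \cdot h, \sigma \cdot \succeq) = \sigma\, \Stab(h, \succeq)\, \sigma^{-1}.
\]
For $\tau \in \Stab(h, \succeq)$, the group action axioms give
$(\sigma\tau\sigma^{-1}) \cdot (\sigma \cdot h) = \sigma \cdot (\tau \cdot h) = \sigma \cdot h$.
The same computation on the weak-order coordinate gives $(\sigma\tau\sigma^{-1}) \cdot (\sigma \cdot \succeq) = \sigma \cdot \succeq$, so we get the inclusion $\supseteq$. For the reverse inclusion, apply the same argument with $\sigma^{-1}$ to the input $(\sigma \cdot h, \sigma \cdot \succeq)$. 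This uses $\sigma^{-1} \cdot (\sigma \cdot h) = h$, which again follows from the action axioms. Conjugating back yields $\subseteq$.

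\textbf{Step 2 (orbits transport).} Write $G = \Stab(h, \succeq)$. For $x \in \N$, the orbit of $\sigma(x)$ under $\sigma G \sigma^{-1}$ is
\[
   \{\sigma\tau\sigma^{-1}(\sigma(x)) \mid \tau \in G\} = \sigma(Gx).
\]
So the $\sigma G\sigma^{-1}$-orbits on $\N$ are exactly the sets $\sigma(B)$ for $B$ a $G$-orbit. Since $\sigma$ is a bijection, every point of $\N$ has the form $\sigma(x)$, so every orbit is accounted for. By Step 1, the blocks of $\Omega(\sigma \cdot h, \sigma \cdot \succeq)$ are therefore $\{\sigma(B) \mid B \in \Omega(h, \succeq)\}$. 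This is $\sigma \cdot \Omega(h, \succeq)$ by the definition of the action on $\PartN$ in Subsection~\ref{subsec:Sym-actions}.

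There is no real obstacle here. The only point requiring care is Step 1: the stabilizer is taken under the diagonal action, so the conjugation identity must be checked on both coordinates $h$ and $\succeq$ simultaneously. The argument uses only the abstract action axioms on $\I$, since no other structure on $\I$ is available.
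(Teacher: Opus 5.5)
Your proof is correct and follows essentially the same route as the paper: first establish $\Stab(\sigma \cdot h, \sigma \cdot \succeq) = \sigma\,\Stab(h,\succeq)\,\sigma^{-1}$, then note that the orbits of the conjugate subgroup are exactly the $\sigma$-images of the original orbits. The only cosmetic difference is in the first step. The paper gets the conjugation identity from a single chain of equivalences ($\tau$ stabilizes the transported input iff $\sigma^{-1}\tau\sigma$ stabilizes the original one), whereas you prove one inclusion and obtain the other by rerunning the argument with $\sigma^{-1}$.
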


\begin{proof}
We first show
$\Stab(\sigma \cdot h, \sigma \cdot \succeq)
 = \sigma \, \Stab(h, \succeq) \, \sigma^{-1}$.
For any $\tau \in \Sym(\N)$, it holds that
$\tau \in \Stab(\sigma \cdot h, \sigma \cdot \succeq)$ 
if and only if
$\tau \cdot (\sigma \cdot h) = \sigma \cdot h$ and
$\tau \cdot (\sigma \cdot \succeq) = \sigma \cdot \succeq$. 
By associativity of the action, these conditions are equivalent to
$(\sigma^{-1} \tau \sigma) \cdot h = h$ and
$(\sigma^{-1} \tau \sigma) \cdot \succeq \,=\, \succeq$, i.e.,
$\sigma^{-1} \tau \sigma \in \Stab(h, \succeq)$, equivalently
$\tau \in \sigma \, \Stab(h, \succeq) \, \sigma^{-1}$.

Now consider the orbits of these two stabilizers on $\N$. 
The orbits of $\sigma \, \Stab(h, \succeq) \, \sigma^{-1}$ are exactly
the $\sigma$-images of the orbits of $\Stab(h, \succeq)$, since
\[
   \sigma \, \Stab(h, \succeq) \, \sigma^{-1} \cdot \sigma(x)
   =
   \sigma \cdot \Stab(h, \succeq) \cdot x.
\]
So the partition of $\N$ into orbits of
$\Stab(\sigma \cdot h, \sigma \cdot \succeq)$ is the
$\sigma$-image of the partition into orbits of
$\Stab(h, \succeq)$, i.e.,
$\Omega(\sigma \cdot h, \sigma \cdot \succeq) = \sigma \cdot
\Omega(h, \succeq)$.
\end{proof}

\subsection{The Two Axioms}

We now formulate the axiomatic characterization of the orbit partition rule. 
The intuition is the following. 
A partition-valued rule should
refine $\succeq$ as much as the symmetries of $(h, \succeq)$ permit
and no further.
 ``As much as permit'' is one constraint; ``no further'' is another. 
 Each constraint can be stated independently, and we shall
show that the orbit partition rule is the unique rule satisfying both.

The two axioms have intuitive content beyond their formal statement.
The first, symmetry-saturation, embodies the demand that intrinsically
indistinguishable players be assigned to the same block. 
The rule must be at least as coarse as the symmetries dictate. 
The second, maximal-fineness, embodies the dual demand that the rule make as many
distinctions as possible. 
The rule must be at least as fine as the
input distinguishes. 
Together, the two axioms pin down the partition
exactly: not too coarse, not too fine, but at the unique level where
distinguishability matches the input's information content.

\begin{definition}[Symmetry-Saturation]
\label{ax:sat}
A partition-valued rule $T$ is \emph{symmetry-saturated} if
$\Omega(h, \succeq) \trianglelefteq T(h, \succeq)$ for every input.
\end{definition}

Symmetry-saturation says that if two players are connected by some
intrinsic symmetry of $(h, \succeq)$, they end up in the same
block of $T(h, \succeq)$. 
Equivalently, no symmetry-related pair is
separated by the rule. 
The rule respects the input's symmetries.

\begin{definition}[Maximal-Fineness]
\label{ax:fine}
A partition-valued rule $T$ is said to be 
\emph{maximally fine} if
$T(h, \succeq) \trianglelefteq \Omega(h, \succeq)$ for every input.
\end{definition}

Maximal-fineness says that if two players are not connected by any
intrinsic symmetry, they end up in different blocks. 
The rule does not artificially merge players the input distinguishes.

\begin{remark}[Why Two Axioms]
\label{rem:why-two}
The two axioms pull in opposite directions from each other. 
Symmetry-saturation
imposes a lower bound on the rule (the partition cannot be too fine, because
forcing symmetric pairs into the same block coarsens it).
Maximal-fineness imposes an upper bound (the partition cannot be
too coarse).
They are independent. 
The trivial rule
$T(h, \succeq) := \Part(\succeq)$ satisfies symmetry-saturation but
not maximal-fineness when $\Omega \neq \Part(\succeq)$; the
discrete rule $T(h, \succeq) := \{\{x\} \mid x \in \N\}$ satisfies
maximal-fineness but not symmetry-saturation when $\Omega$ has any
non-singleton block. 
Their conjunction will pin down a unique rule.
\end{remark}

\subsection{The Characterization Theorem}

We can now state and prove the second main theorem. 
The two axioms introduced above are individually weak; 
each is satisfied by many rules. 
Yet together they admit a unique solution: the orbit
partition rule $\Omega$. 
The proof is short and turns on the
antisymmetry of refinement, which is what makes the conjunction of
upper and lower bounds determine a unique partition.

The theorem is positive in spirit, complementing the impossibility
of the previous section. 
Where the strict codomain admits no
anonymous rule, the partition codomain admits a unique honest one.
The orbit partition is, as it were, the canonical answer to the
question ``what does the input determine, given that strict output
is unattainable?''

\begin{theorem}[Characterization of $\Omega$]
\label{thm:characterization}
A partition-valued tie-breaking rule $T$ satisfies both
symmetry-saturation and maximal-fineness if and only if
$T(h, \succeq) = \Omega(h, \succeq)$ for every input. 
The orbit partition rule $\Omega$ moreover satisfies the refinement
constraint and is anonymous.
\end{theorem}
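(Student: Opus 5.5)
The plan is to prove the two directions of the equivalence separately and then record the two additional properties of $\Omega$ by citing the lemmas already established.

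\emph{The ``if'' direction.} Suppose $T(h,\succeq)=\Omega(h,\succeq)$ for every input.
\begin{itemize}[leftmargin=2em]
\item By reflexivity of $\trianglelefteq$, we have both $\Omega\trianglelefteq T$ and $T\trianglelefteq\Omega$ pointwise.
\item Hence $T$ is symmetry-saturated and maximally fine.
\item To see that $\Omega$ is a legitimate partition-valued tie-breaking rule in the sense of Definition~\ref{def:rule-partition}, note that it is well-defined (orbits partition $\N$) and satisfies the refinement constraint by Lemma~\ref{lem:Omega-refines}.
\end{itemize}

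\emph{The ``only if'' direction.} Fix an input $(h,\succeq)$.
\begin{itemize}[leftmargin=2em]
\item Symmetry-saturation gives $\Omega(h,\succeq)\trianglelefteq T(h,\succeq)$.
\item Maximal-fineness gives $T(h,\succeq)\trianglelefteq\Omega(h,\succeq)$.
\item Refinement is antisymmetric on $\PartN$, so the two partitions coincide.
\end{itemize}

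The only point needing an explicit argument is this antisymmetry. Suppose $\mathcal{P}\trianglelefteq\mathcal{Q}$ and $\mathcal{Q}\trianglelefteq\mathcal{P}$, and let $B\in\mathcal{P}$. Then $B\subseteq C$ for some $C\in\mathcal{Q}$, and $C\subseteq B'$ for some $B'\in\mathcal{P}$. Since $B$ is nonempty and the blocks of $\mathcal{P}$ are pairwise disjoint, $B\subseteq B'$ forces $B=B'$, and therefore $B=C$. So every block of $\mathcal{P}$ is a block of $\mathcal{Q}$. The symmetric argument gives the reverse inclusion, so $\mathcal{P}=\mathcal{Q}$. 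Applied at every input, this yields $T=\Omega$.

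\emph{The ``moreover'' clause.} The refinement constraint for $\Omega$ is Lemma~\ref{lem:Omega-refines}. Anonymity in the sense of Definition~\ref{def:anonymity-partition} is exactly the identity $\Omega(\sigma\cdot h,\sigma\cdot\succeq)=\sigma\cdot\Omega(h,\succeq)$, which is Lemma~\ref{lem:Omega-equivariant}.

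There is no genuine obstacle here: the theorem follows essentially by unwinding definitions. The only step requiring care is the antisymmetry of $\trianglelefteq$, which is why I spell it out rather than calling it standard. I will also note that anonymity is not assumed of $T$ in the characterization; it comes out as a consequence of $T=\Omega$.
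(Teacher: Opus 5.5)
Your proposal is correct and follows the paper's proof essentially verbatim: reflexivity of $\trianglelefteq$ gives the ``if'' direction, antisymmetry gives the ``only if'' direction, and Lemmas~\ref{lem:Omega-refines} and~\ref{lem:Omega-equivariant} give the ``moreover'' clause. The only difference is that you prove antisymmetry of refinement explicitly, and you do so correctly using nonemptiness and disjointness of blocks, whereas the paper simply cites that refinement is a partial order.
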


\begin{proof}
We prove each direction.

$(\Leftarrow)$ Suppose $T(h, \succeq) = \Omega(h, \succeq)$ for every input. 
Then it follows that for every input, 
$\Omega(h, \succeq) \trianglelefteq T(h, \succeq)$ (since both
sides are equal, and refinement is reflexive), so $T$ is symmetry-saturated; 
and similarly $T(h, \succeq) \trianglelefteq \Omega(h, \succeq)$, 
so $T$ is maximally fine.

$(\Rightarrow)$ Suppose $T$ satisfies both axioms. 
Then for every input,
$\Omega(h, \succeq) \trianglelefteq T(h, \succeq)$ (by
symmetry-saturation) and
$T(h, \succeq) \trianglelefteq \Omega(h, \succeq)$ (by
maximal-fineness). 
Refinement is a partial order on $\PartN$, in
particular antisymmetric, so if $\mathcal{P} \trianglelefteq
\mathcal{Q}$ and $\mathcal{Q} \trianglelefteq \mathcal{P}$, then
$\mathcal{P} = \mathcal{Q}$. 
Applying this with
$\mathcal{P} = T(h, \succeq)$ and
$\mathcal{Q} = \Omega(h, \succeq)$ gives
$T(h, \succeq) = \Omega(h, \succeq)$ for every input.

Refinement of $\Omega$ as a partition-valued rule is exactly
Lemma~\ref{lem:Omega-refines}. 
Anonymity of $\Omega$ is exactly
Lemma~\ref{lem:Omega-equivariant}.
\end{proof}

\begin{remark}[On the Role of Anonymity]
\label{rem:anonymity-role}
Anonymity does not appear as a hypothesis of
Theorem~\ref{thm:characterization}; the two axioms are sufficient
on their own. 
Anonymity is a consequence (Lemma
\ref{lem:Omega-equivariant} shows $\Omega$ is anonymous, and
$\Omega$ is the unique rule satisfying the two axioms). 
This may be
surprising at first glance, since anonymity is the natural
fairness axiom to impose. 
But anonymity, applied to
$\sigma \in \Stab(h, \succeq)$, gives only that $\sigma$ permutes
the blocks of $T(h, \succeq)$ as a \emph{set}, not that each block
is individually $\sigma$-invariant. 
The latter is what symmetry-saturation explicitly demands, and it is strictly stronger. 
The following example illustrates the gap.
\end{remark}

\begin{example}[Anonymity Does Not Imply Symmetry-Saturation]
\label{ex:anon-not-sat}
Let $\N = \{a, b, c, d\}$ and let $h$ be such that
$\Stab(h)$ is generated by $\sigma = (a\,b)(c\,d)$. Take
${\succeq}$ to be the all-tied weak order. Then $\Stab(h, \succeq)
= \langle \sigma \rangle$ (since $\succeq$ is fixed by every
permutation), and the orbits of $\sigma$ on $\N$ are $\{a, b\}$ and
$\{c, d\}$. So $\Omega(h, \succeq) = \{\{a, b\}, \{c, d\}\}$.

Consider the partition $\mathcal{P} = \{\{a, c\}, \{b, d\}\}$. 
The permutation $\sigma$ swaps the two blocks of $\mathcal{P}$ as
sets: $\sigma \cdot \{a, c\} = \{b, d\}$ and $\sigma \cdot \{b, d\} = \{a, c\}$. 
Hence, $\sigma \cdot \mathcal{P} = \mathcal{P}$. 
We can extend the assignment $T(h, \succeq) :=
\mathcal{P}$ to an anonymous partition-valued rule on all of
$\I \times \Wk(\N)$ as follows. 
On the $\Sym(\N)$-orbit of
$(h, \succeq)$, define $T(\rho \cdot h, \rho \cdot \succeq) :=
\rho \cdot \mathcal{P}$ for each $\rho \in \Sym(\N)$. 
This is well-defined: if $\rho_1 \cdot (h, \succeq) =
\rho_2 \cdot (h, \succeq)$, then $\rho_2^{-1} \rho_1 \in
\Stab(h, \succeq) = \langle \sigma \rangle$, and since
$\sigma \cdot \mathcal{P} = \mathcal{P}$, we have
$(\rho_2^{-1} \rho_1) \cdot \mathcal{P} = \mathcal{P}$, i.e.,
$\rho_1 \cdot \mathcal{P} = \rho_2 \cdot \mathcal{P}$. 
Outside the $\Sym(\N)$-orbit of $(h, \succeq)$, define $T$ to be the orbit
partition rule $\Omega$ (which is anonymous, by
Lemma~\ref{lem:Omega-equivariant}). 
The resulting $T$ is anonymous
on all of $\I \times \Wk(\N)$, refines $\Part(\succeq)$ at every
input, and outputs $\mathcal{P}$ at the input $(h, \succeq)$.

This rule $T$ is anonymous, but it violates symmetry-saturation
and maximal-fineness at the input $(h, \succeq)$. 
Furthermore, 
$T(h, \succeq) = \mathcal{P}$ is neither finer than
$\Omega(h, \succeq)$ (since $\{a, c\}$ is not contained in either
$\{a, b\}$ or $\{c, d\}$) nor coarser than $\Omega(h, \succeq)$
(since $\{a, b\}$ is not contained in either $\{a, c\}$ or $\{b, d\}$). 
The partition $\mathcal{P}$ ``crosses'' the orbit structure. 
It respects the symmetry as a global feature, in the
sense of being $\sigma$-invariant as a set of subsets, without
aligning with the symmetry's orbits.
\end{example}

The characterization theorem gives the partition framework a
philosophical robustness that the strict framework lacks. 
In the strict framework, the impossibility of anonymity forces the
practitioner into a trilemma: tolerate ties, give up anonymity, or
appeal to external data. 
The partition framework offers a clean
fourth option: be honest about the residual ties, leaving them
unresolved in the output and letting the practitioner decide
separately how to break them (or not). 
The output of the canonical partition rule is then 
a complete record of what the input
determines, no more and no less.

This perspective has a methodological consequence. 
A practitioner who wants a strict ranking must, by Theorem~\ref{thm:impossibility},
appeal to data outside $(h, \succeq)$. 
The partition framework
clarifies what is honestly available before this external appeal:
the orbit partition $\Omega$. 
Any strict ranking the practitioner
reports is then an extension of $\Omega$, with the data outside
$(h, \succeq)$ contributing only to the within-block and
between-block ordering. 
The decomposition theorem in the next
section makes this picture exact.

\section{The Decomposition Theorem}
\label{sec:decomp}

The impossibility and characterization theorems address two
different codomains, namely linear orders
 (where no rule exists) and partitions
 (where $\Omega$ is the unique rule). 
The decomposition theorem then unifies them.
Every reasonable strict tie-breaking rule factors uniquely as the
canonical orbit partition followed by an arbitrary completion.

\subsection{Partition-Consistency}

For the decomposition theorem to take a clean form, we restrict
attention to strict rules whose output is structurally compatible
with the canonical orbit partition. 
The relevant compatibility condition, partition-consistency, 
asks that the blocks of the orbit
partition appear as contiguous intervals in the strict rule's output.
This is the natural condition; indeed,  a strict rule that violates it would
split some orbit-block $B$ around the elements of another
orbit-block, placing some members of $B$ above all of the other
block and other members of $B$ below all of it. 
The rule would then fail to treat orbit-blocks as cohesive units, even though the
canonical structure marks the entire block as a single
indistinguishability class. 
Partition-consistency rules out this
anomalous interleaving and captures the rules that genuinely use the
canonical partition as their organizing backbone.

\begin{definition}[Consistency of a Linear Order with a Partition]
\label{def:consistency}
A linear order ${>} \in \Lin(\N)$ is \emph{consistent} with a
partition $\mathcal{P} \in \PartN$ if every block of $\mathcal{P}$
is an interval in $>$, i.e., 
 for every block $B \in \mathcal{P}$ and
every $z \in \N$, if there exist $x, y \in B$ with $x > z > y$,
then $z \in B$.
\end{definition}

A linear order is consistent with a partition if and only if the
blocks appear as contiguous chunks in the linear order. 
The linear order on $\N$ corresponds to a linear order on the blocks of
$\mathcal{P}$ (block $B_1$ above block $B_2$ if every element of
$B_1$ is above every element of $B_2$) together with a linear order
within each block.

\begin{definition}[Partition-Consistent Strict Rule]
\label{def:partition-consistent}
A strict rule $T \colon \I \times \Wk(\N) \to \Lin(\N)$ is
\emph{partition-consistent} if, for every input $(h, \succeq)$, the
linear order $T(h, \succeq)$ is consistent with the orbit
partition $\Omega(h, \succeq)$.
\end{definition}

A strict rule failing partition-consistency would split some
orbit-block $B$ in the output. 
There would exist $x, y \in B$ and a
player $z \notin B$ with $x > z > y$, so that the elements of $B$
are not consecutive in the rule's output. 
This means $B$, an indistinguishability class of the canonical structure, is broken up
across two regions of the output, with members of another block placed in between. 
There is no structural justification for such an interleaving. 
Partition-consistency rules it out.

\begin{remark}[Partition-Consistency and Anonymity]
\label{rem:cons-vs-anon}
Anonymity, in the strict setting, is impossible whenever symmetric
inputs are present (Theorem~\ref{thm:impossibility}).
Partition-consistency, in contrast, is achievable. 
It is the strongest natural compatibility between a strict rule and the
canonical orbit structure that survives the impossibility, and it
is satisfied by every concrete strict tie-breaker that can be
constructed by linearizing the orbit partition.
\end{remark}

\subsection{Completions}

The orbit partition $\Omega$ records what the input determines. 
To turn it into a strict linear order, two further pieces of data are
required: an order within each block, and an order on the blocks
within each indifference class. 
We call this data a completion.

The conceptual role of a completion is to supply exactly the
information that the input itself does not provide. 
Within an orbit block, by construction, no symmetry of $(h, \succeq)$ distinguishes
the members; an order on them is therefore arbitrary in the strongest
sense, requiring data outside $(h, \succeq)$. 
Across blocks within an
indifference class, the situation is similar. 
The blocks have been
isolated as separate orbits, so the input distinguishes them, but
since they are tied at the level of $\succeq$, the strict order
between them must come from somewhere else. 
Both pieces of data are
exactly the irreducibly arbitrary content of strict tie-breaking.

\begin{definition}[Completion]
\label{def:completion}
A \emph{completion} for an input $(h, \succeq)$ is a pair
$(\beta, \gamma)$ consisting of 
\begin{itemize}[leftmargin=2em]
\item a \emph{block-internal order} $\beta$, assigning to each
block $B$ of $\Omega(h, \succeq)$ a linear order on $B$;
\item a \emph{block ordering} $\gamma$, assigning to each
indifference class $E$ of $\succeq$ a linear order on the
set of blocks of $\Omega(h, \succeq)$ contained in $E$.
\end{itemize}
A \emph{global completion} $C$ is a choice of completion
$(\beta_{h, \succeq}, \gamma_{h, \succeq})$ for every input.
\end{definition}

Figure~\ref{fig:two-level} illustrates the two-level partition
structure that a completion is built on top of: 
the weak order $\succeq$ groups $\N$ into indifference classes, and within each
class the orbit partition $\Omega(h, \succeq)$ subdivides further into blocks. 
The data $\gamma$ orders the blocks within each
class, and $\beta$ orders the players within each block.

\begin{figure}[h]
\centering
\begin{tikzpicture}[
  player/.style={circle, draw, minimum size=8mm, inner sep=0pt, font=\small},
  block/.style={draw, dashed, rounded corners=2pt, inner sep=4pt, inner ysep=4pt},
  class/.style={draw, thick, rounded corners=4pt, inner sep=10pt, inner ysep=10pt},
]


\node[player] (a) at (0,3) {$a$};
\node[player, right=4mm of a] (b) {$b$};
\node[player, right=18mm of b] (c) {$c$};
\node[player, right=4mm of c] (d) {$d$};

\node[block, fit=(a)(b)] (B1) {};
\node[block, fit=(c)(d)] (B2) {};

\node[font=\footnotesize, above=3mm of B1] {$B_1$};
\node[font=\footnotesize, above=3mm of B2] {$B_2$};

\begin{scope}[on background layer]
  \node[class, fit=(B1)(B2)] (E1) {};
\end{scope}


\node[player] (e) at (0.4, 0) {$e$};
\node[player, right=22mm of e] (f) {$f$};

\node[block, fit=(e)] (B3) {};
\node[block, fit=(f)] (B4) {};

\node[font=\footnotesize, above=3mm of B3] {$B_3$};
\node[font=\footnotesize, above=3mm of B4] {$B_4$};

\begin{scope}[on background layer]
  \node[class, fit=(B3)(B4)] (E2) {};
\end{scope}

\node[font=\small, right=4mm of E1] {$E_1$ \ (top class)};
\node[font=\small, right=4mm of E2] {$E_2$ \ (bottom class)};

\draw[->, thick] (-2.2, 2.6) -- node[left, font=\footnotesize] {$\succeq$} (-2.2, 0.5);

\end{tikzpicture}
\caption{The two-level partition structure on $\N = \{a, b, c, d, e, f\}$. 
The weak order $\succeq$ partitions
$\N$ into indifference classes $E_1$ (top) and $E_2$ (bottom).
Within each class, the orbit partition $\Omega(h, \succeq)$
subdivides further into blocks: $E_1$ contains
$B_1 = \{a, b\}$ and $B_2 = \{c, d\}$; $E_2$ contains
$B_3 = \{e\}$ and $B_4 = \{f\}$. 
The order between classes is
forced by $\succeq$; the order of blocks within each class is the
data $\gamma$; the order of players within each block is the
data $\beta$.}
\label{fig:two-level}
\end{figure}

\begin{definition}[Lift of a Global Completion]
\label{def:lift}
Given a global completion $C = (\beta, \gamma)$, the \emph{lift}
is the rule $T_C \colon \I \times \Wk(\N) \to \Lin(\N)$ defined as follows. 
For input $(h, \succeq)$, the linear order
$T_C(h, \succeq)$ ranks two players $x, y$ as:
\begin{itemize}[leftmargin=2em]
\item if $x$ and $y$ lie in different indifference classes of
$\succeq$, compare them as in $\succeq$;
\item if $x$ and $y$ lie in the same indifference class but in
different blocks of $\Omega(h, \succeq)$, compare them by
$\gamma_{h, \succeq}$ applied to their respective blocks;
\item if $x$ and $y$ lie in the same block $B$, compare them by
$\beta_{h, \succeq}(B)$.
\end{itemize}
\end{definition}

\begin{lemma}[Lift Is Well-Defined and Partition-Consistent]
\label{lem:lift-well-defined}
For every global completion $C$, its lift $T_C$ is a
partition-consistent strict tie-breaking rule.
\end{lemma}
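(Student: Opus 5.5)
The plan is to fix an input $(h, \succeq)$ and write $>$ for the relation $T_C(h, \succeq)$ given by Definition~\ref{def:lift}. We then verify four things in turn: (i) the three clauses cover every pair exactly once, so $>$ is well-defined; (ii) $>$ is a strict linear order on $\N$; (iii) $>$ refines $\succeq$; (iv) $>$ is consistent with $\Omega(h, \succeq)$.

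For (i), we use Lemma~\ref{lem:Omega-refines}. Each block of $\Omega(h, \succeq)$ lies inside a single indifference class of $\succeq$. Hence for distinct $x, y$ exactly one of three cases holds: they lie in different classes, or in the same class but different blocks, or in the same block. In the second case, both blocks lie in the common class $E$, so $\gamma_{h, \succeq}(E)$ compares them. In the third case, $\beta_{h, \succeq}(B)$ compares $x$ and $y$.

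For (ii), we encode $>$ lexicographically to avoid a nine-case transitivity check. Send $x$ to the triple $\bigl(E(x), B(x), x\bigr)$, where $E(x)$ is its indifference class and $B(x)$ its orbit block. Order the first coordinate by the strict order on classes from the proof of Lemma~\ref{lem:stab-preserves-classes}. Order the second coordinate by $\gamma_{h,\succeq}(E(x))$ and the third by $\beta_{h,\succeq}(B(x))$. The relation $>$ is exactly the lexicographic comparison of these triples, where each coordinate is compared only when all earlier coordinates agree. A lexicographic order built from linear orders at each level is itself a strict linear order, so irreflexivity, transitivity and totality follow. Transitivity is the only step needing care; we check it by splitting on the first coordinate at which the triples of $x,y$ and of $y,z$ differ. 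This bookkeeping is the main, though routine, obstacle.

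Refinement (iii) is immediate from the first clause of Definition~\ref{def:lift}: if $x \succ y$, they lie in different classes and $x > y$. For consistency (iv), suppose $x, y \in B$ and $x > z > y$. Since $x$ and $y$ share a class $E$, the lexicographic description forces $z \in E$. If $z$ were in another block $B' \subseteq E$, then $\gamma_{h, \succeq}(E)$ would have to place $B$ above $B'$ (because $x > z$) and $B'$ above $B$ (because $z > y$), contradicting antisymmetry of $\gamma_{h, \succeq}(E)$. Hence $z \in B$. Since the input was arbitrary, $T_C$ is a partition-consistent strict tie-breaking rule.
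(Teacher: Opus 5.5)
Your proposal is correct and follows essentially the same route as the paper: both verify directly that the three clauses of Definition~\ref{def:lift} give a linear order, that the first clause yields refinement of $\succeq$, and that each $\Omega$-block is an interval. Your lexicographic packaging of transitivity is the paper's case split (first on indifference classes, then on blocks), and your consistency argument by contradiction with the asymmetry of $\gamma_{h,\succeq}(E)$ is the same observation the paper makes directly, namely that any $z$ outside $B$ lies entirely above or below $B$.
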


\begin{proof}
We verify the three required properties.

\smallskip\noindent
\emph{$T_C(h, \succeq)$ is a linear order.} 
Reflexivity and totality follow from the construction. 
The three cases of the
definition exhaust all pairs $(x, y) \in \N \times \N$ (as $\Omega(h, \succeq)$ 
refines $\Part(\succeq)$, so the case ``same
indifference class, different blocks'' is well-defined; and the
case ``same block'' is meaningful since blocks are subsets of $\N$). 
For each pair $(x, y)$, exactly one of the three cases
applies, and in each case a strict comparison or equality is
prescribed.

\emph{Antisymmetry.} If $x \neq y$ and the comparison places $x$ above
$y$, then by construction it does not also place $y$ above $x$
(the block-internal $\beta$ is a linear order, the block ordering
$\gamma$ is a linear order, and $\succeq$ is a weak order whose
strict comparisons are antisymmetric).

\emph{Transitivity.} Suppose $x >_T y$ and $y >_T z$. 
We must show $x >_T z$. Let $E_x, E_y, E_z$ denote the indifference classes of
$x, y, z$ in $\succeq$, and $B_x, B_y, B_z$ their orbit blocks.
If $x \succ y$ strictly in $\succeq$ or $y \succ z$ strictly in
$\succeq$, then transitivity of $\succeq$ gives $x \succ z$
strictly, and the first case of the lift gives $x >_T z$. 
We may therefore assume $E_x = E_y = E_z =: E$.

Within $E$, three further subcases:
\begin{itemize}[leftmargin=2em]
\item $B_x = B_y = B_z$: both comparisons $x >_T y$ and $y >_T z$
come from the third case of the lift (same block) and are
statements in the block-internal linear order $\beta(B_x)$.
Transitivity of $\beta(B_x)$ gives $x >_{\beta(B_x)} z$, and
the third case of the lift gives $x >_T z$.
\item $B_x = B_y$, but $B_y \neq B_z$: $x >_T y$ comes from the      
third case (block-internal $\beta$); $y >_T z$ comes from the
second case, giving $B_y >_{\gamma_E} B_z$. 
Since $B_x = B_y$, we have $B_x >_{\gamma_E} B_z$, in particular $B_x \neq B_z$. 
The second case of the lift applied to $(x, z)$ gives $x >_T z$. 
The symmetric subcase ($B_x \neq B_y$, $B_y
= B_z$) is handled identically.
\item $B_x \neq B_y$ and $B_y \neq B_z$: both comparisons come from
the second case, giving $B_x >_{\gamma_E} B_y$ and $B_y >_{\gamma_E} B_z$. 
Transitivity of the strict linear
order $\gamma_E$ yields $B_x >_{\gamma_E} B_z$, in particular $B_x \neq B_z$. 
The second case of the lift gives $x >_T z$.
(In particular, the configuration $B_x = B_z \neq B_y$ is
ruled out: it would force a $\gamma_E$-cycle $B_x >_{\gamma_E}
B_y >_{\gamma_E} B_x$, contradicting irreflexivity of the
strict linear order $\gamma_E$.)
\end{itemize}
In every subcase, $x >_T z$.

\smallskip\noindent
\emph{$T_C(h, \succeq)$ refines $\succeq$.} 
Suppose $x \succ y$ strictly in $\succeq$, i.e., $x$ and $y$ are in different
indifference classes with $x$'s class above $y$'s class. 
Then by the first case of Definition~\ref{def:lift}, $T_C(h, \succeq)$
ranks $x$ above $y$.

\smallskip\noindent
\emph{$T_C(h, \succeq)$ is consistent with $\Omega(h, \succeq)$.}
Each block $B$ of $\Omega(h, \succeq)$ is an interval in
$T_C(h, \succeq)$; by construction, all elements of $B$ are
ranked consecutively within their indifference class, with their
relative order given by $\beta(B)$, and any element outside $B$
is either in a different indifference class (and hence ranked
strictly above or strictly below all of $B$) or in a different
block of the same class (and hence ranked, by $\gamma$, either
strictly above or strictly below all of $B$). 
So no element outside $B$ falls between two elements of $B$, which is the
definition of $B$ being an interval.
\end{proof}

\subsection{The Decomposition Theorem}

We now arrive at the third main result. 
The decomposition theorem says
that every partition-consistent strict tie-breaking rule corresponds
bijectively to a global completion. 
The forced part of any such rule
is the orbit partition $\Omega$, supplied automatically by the framework. 
The free part is the completion, which encodes the rule's
arbitrary choices about how to break the ties that $\Omega$ leaves.
The decomposition theorem makes this informal picture precise.

The conceptual significance of the theorem is the following. 
Tie-breaking systems used in practice (chess cascades, sports lexicographic
rules, voting tie-breakers) are diverse in surface but uniform in structure. 
Each is a specific completion choice on top of a common canonical core. 
The decomposition makes this uniformity explicit and
gives a clean bijection between the universe of partition-consistent
strict rules and the universe of completion data. 
The framework's forced part is shared; 
the diversity lives entirely in the completion.

\begin{theorem}[Decomposition]
\label{thm:decomposition}
The map $C \mapsto T_C$ is a bijection between global completions
and partition-consistent strict tie-breaking rules.
\end{theorem}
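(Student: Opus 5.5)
The proof splits into well-definedness, injectivity, and surjectivity of $C \mapsto T_C$. By Lemma~\ref{lem:lift-well-defined}, every lift $T_C$ is a partition-consistent strict tie-breaking rule, so the map lands in the right codomain. It then remains to show that the map is injective and surjective. Both arguments work input by input, since a global completion is simply an independent choice of completion $(\beta_{h,\succeq},\gamma_{h,\succeq})$ at each input. Likewise, a rule is just an independent choice of output at each input, and the axioms (refinement of $\succeq$ and consistency with $\Omega(h,\succeq)$) are pointwise conditions. So it suffices to fix an input $(h,\succeq)$ and show that completions for it correspond bijectively to linear orders on $\N$ that refine $\succeq$ and are consistent with $\Omega(h,\succeq)$.

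For surjectivity, I would start from a partition-consistent strict rule $T$ and write $L = T(h,\succeq)$. I would define $\beta(B)$ as the restriction of $L$ to the block $B$. For each indifference class $E$, I would define $\gamma_E$ on the blocks inside $E$ (these exist and cover $E$ by Lemma~\ref{lem:Omega-refines}) by declaring $B_1 >_{\gamma_E} B_2$ iff $x >_L y$ for some $x\in B_1$, $y\in B_2$.

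The key step is to show that $\gamma_E$ is well defined, i.e.\ independent of the chosen representatives, and that it is a linear order. Take distinct blocks $B_1,B_2$, and suppose $x >_L y$ and $y' >_L x'$ with $x,x'\in B_1$ and $y,y'\in B_2$. I would derive a contradiction from interval-consistency by a short case analysis on the positions of $x,x'$ relative to $y$:
\begin{itemize}[leftmargin=2em]
\item If $x' >_L y$, then $x' > y$ together with $y' > x'$ gives $y' > x' > y$, so $x'$ sits between two elements of $B_2$ and must lie in $B_2$. This contradicts $B_1 \neq B_2$.
\item If $y >_L x'$, then $x > y > x'$ places $y$ between two elements of $B_1$, so $y \in B_1$, again a contradiction.
\end{itemize}
Hence all cross comparisons between $B_1$ and $B_2$ go the same way. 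Totality, antisymmetry and transitivity of $\gamma_E$ then follow directly from those of $L$.

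Once $\gamma_E$ is known to be a linear order, I would check $T_{(\beta,\gamma)}(h,\succeq) = L$ by going through the three cases of Definition~\ref{def:lift}:
\begin{itemize}[leftmargin=2em]
\item For $x,y$ in different indifference classes, $L$ refines $\succeq$, so $L$ agrees with $\succeq$ there.
\item For $x,y$ in different blocks of the same class, the comparison in $L$ is exactly the definition of $\gamma_E$.
\item For $x,y$ in the same block, the comparison in $L$ is exactly $\beta(B)$.
\end{itemize}

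For injectivity, I would show that the completion can be recovered from its lift. Restricting $T_C(h,\succeq)$ to a block $B$ returns $\beta(B)$ by the third case of the lift. Comparing elements of two distinct blocks in the same class returns $\gamma_E$ by the second case. So $T_C = T_{C'}$ forces $C = C'$ at every input; equivalently, the surjectivity construction is a left inverse of $C\mapsto T_C$.

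I expect the only real obstacle to be the well-definedness of $\gamma_E$, namely that partition-consistency prevents two distinct blocks from interleaving. This is handled by the two-case interval argument above; everything else is bookkeeping.
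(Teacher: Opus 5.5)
Your proposal is correct and follows essentially the same route as the paper. The forward direction comes from Lemma~\ref{lem:lift-well-defined}, and your surjectivity and injectivity checks are exactly the paper's verifications $T_{C(T)} = T$ and $C(T_C) = C$ for the inverse map built from the restrictions $T(h,\succeq)|_B$ and from representative comparisons between blocks; your two-case argument that distinct interval blocks cannot interleave spells out a step the paper only asserts.
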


\begin{proof}
Lemma~\ref{lem:lift-well-defined} establishes the forward direction:
every global completion lifts to a partition-consistent strict rule. 
We construct the inverse map $T \mapsto C(T)$ and verify
that the two maps are mutual inverses.

\smallskip\noindent
\emph{Construction of the inverse map.} 
Given a partition-consistent strict rule $T$, we define a global
completion $C(T)$ as follows. 
For each input $(h, \succeq)$ and
each block $B$ of $\Omega(h, \succeq)$, set
\[
   \beta^T_{h, \succeq}(B)
   :=
   T(h, \succeq)|_B,
\]
the restriction of the linear order $T(h, \succeq)$ to $B$. 
This is well-defined, since the restriction of a linear order to any subset is a linear order. 
For the block ordering $\gamma^T$, fix an
indifference class $E$ of $\succeq$ and let $B_1, B_2$ be two
blocks of $\Omega(h, \succeq)$ contained in $E$. 
By partition-consistency, $B_1$ and $B_2$ are intervals in $T(h, \succeq)$. 
Pick any $x \in B_1$ and $y \in B_2$, and set
\[
   B_1 >_{\gamma^T_{h, \succeq, E}} B_2
   \quad\iff\quad
   x >_{T(h, \succeq)} y.
\]
This is well-defined (independent of the choice of $x$ and $y$):
if $x' \in B_1$ and $y' \in B_2$, then since $B_1$ and $B_2$ are
intervals not interleaved with each other, either every element
of $B_1$ is above every element of $B_2$ in $T(h, \succeq)$, or
every element of $B_2$ is above every element of $B_1$. 
To see that
$\gamma^T_{h, \succeq, E}$ is a linear order on the blocks of
$\Omega(h, \succeq)$ contained in $E$ 
we argue as follows. 
Irreflexivity is automatic
(distinct blocks have disjoint elements); 
antisymmetry follows since at most one of $x >_T y$, $y >_T x$ can hold for distinct $x, y$; 
transitivity inherits from transitivity of $T(h, \succeq)$; 
and totality follows from totality of $T(h, \succeq)$, since for any two distinct blocks $B_1, B_2$ in
$E$ and representatives $x \in B_1, y \in B_2$, we have $x \neq y$
and either $x >_T y$ or $y >_T x$.

\smallskip\noindent
\emph{$C(T_C) = C$.} Fix a global completion
$C = (\beta, \gamma)$, and consider the lift $T_C$. 
We show that $\beta^{T_C} = \beta$ and $\gamma^{T_C} = \gamma$.

For $\beta^{T_C}$: by definition,
$\beta^{T_C}_{h, \succeq}(B) = T_C(h, \succeq)|_B$. 
By the construction of $T_C$, the restriction of $T_C(h, \succeq)$ to
the block $B$ is exactly $\beta_{h, \succeq}(B)$, so
$\beta^{T_C}_{h, \succeq}(B) = \beta_{h, \succeq}(B)$.

For $\gamma^{T_C}$: by definition, for two blocks $B_1, B_2$ in
the same indifference class $E$,
$B_1 >_{\gamma^{T_C}_{h, \succeq, E}} B_2$ if and only if some
$x \in B_1$ is above some $y \in B_2$ in $T_C(h, \succeq)$. 
By construction of $T_C$, this comparison is precisely
$B_1 >_{\gamma_{h, \succeq, E}} B_2$, so $\gamma^{T_C} = \gamma$.

\smallskip\noindent
\emph{$T_{C(T)} = T$.} 
Fix a partition-consistent strict rule
$T$, set $C = C(T)$, and consider the lift $T_C$. 
We show $T_C(h, \succeq) = T(h, \succeq)$ for every input.
For two players $x, y \in \N$ at input $(h, \succeq)$:
\begin{itemize}[leftmargin=2em]
\item If $x, y$ lie in different indifference classes of
$\succeq$, then $T_C(h, \succeq)$ ranks them as in
$\succeq$, by the first case of the lift. 
 The same is true of $T(h, \succeq)$; indeed,  $T$ refines $\succeq$, so it
ranks $x, y$ as in $\succeq$.
\item If $x, y$ lie in the same indifference class $E$ but in
different blocks $B_x, B_y$ of $\Omega(h, \succeq)$,
then $T_C(h, \succeq)$ ranks them by
$\gamma^T_{h, \succeq, E}$, which by the construction of
$C(T)$ is the comparison induced by $T(h, \succeq)$ on
$B_x$ and $B_y$. 
Since $B_x$ and $B_y$ are intervals in
$T(h, \succeq)$ (by partition-consistency) and not
interleaved, this comparison agrees with the comparison
of $x$ and $y$ directly in $T(h, \succeq)$.
\item If $x, y$ lie in the same block $B$, then
 $T_C(h, \succeq)$ ranks them by $\beta^T_{h, \succeq}(B)
= T(h, \succeq)|_B$, which is precisely the comparison
of $x, y$ in $T(h, \succeq)$.
\end{itemize}
In all three cases, $T_C(h, \succeq)$ and $T(h, \succeq)$ agree
on the comparison of $x$ and $y$. 
Hence, $T_C = T$.
\end{proof}

\begin{remark}[The Unifying Picture]
\label{rem:decomp-unifying}
The three theorems form a coherent picture:
\begin{itemize}[leftmargin=2em]
\item Theorem~\ref{thm:impossibility} identifies the
obstruction to strict tie-breaking.
Anonymity and strict
output are incompatible whenever symmetric inputs exist.
\item Theorem~\ref{thm:characterization} identifies the
canonical positive answer. 
The unique fair partition-valued
rule is the orbit partition.
\item Theorem~\ref{thm:decomposition} ties them together. 
Every partition-consistent strict rule is the orbit partition
followed by a completion.
\end{itemize}
The decomposition theorem makes precise the informal observation
that tie-breaking systems used in practice are honest until forced to be arbitrary. 
The canonical part (the orbit partition) is forced by the framework. 
The completion is the irreducibly arbitrary part,
requiring data not contained in $(h, \succeq)$, for example a coin
flip, a pre-tournament draw position, an alphabetical convention
on players' names, or a fixed seeding. 
Tie-breaking systems used in practice
can now be classified by their completion data: 
a chess cascade is one global completion, a sports lexicographic rule is
another, a pre-tournament-draw lottery is a third. 
Each is a choice of how to fill in the arbitrary part. 
The framework gives us the structural backbone on which all of these sit.
\end{remark}

\subsection{Worked Examples of the Decomposition}

We illustrate the decomposition theorem on two concrete instances.

\begin{example}[Three-Way Symmetric Tie in a Round-Robin]
\label{ex:three-way}
Returning to Example~2, consider a round-robin
tournament with $\N = \{a, b, c, d\}$, in which the final
standings have $a, b, c$ tied at the top with the same points,
and $d$ alone at the bottom. 
The match-result matrix $h$ has:
\begin{itemize}[leftmargin=2em]
\item $a$ beats $b$, $b$ beats $c$, $c$ beats $a$, all by goal
      difference $+1$;
\item each of $a, b, c$ beats $d$ by goal difference $+1$.
\end{itemize}
The cyclic permutation $\rho = (a\,b\,c)$ fixes $h$ (each match
result becomes another match result with the same goal
difference) and fixes $\succeq$ (it permutes the top class
internally and fixes $d$).
 So $\Stab(h, \succeq) \supseteq \langle \rho \rangle$. 
 To compute $\Omega(h, \succeq)$, 
 observe that the orbits of any group between
$\langle \rho \rangle$ and $\Stab(h, \succeq)$ are unions of
$\langle \rho \rangle$-orbits, which are $\{a, b, c\}$ and $\{d\}$.
By Lemma~\ref{lem:stab-preserves-classes}, every
$\sigma \in \Stab(h, \succeq)$ preserves each indifference class of
$\succeq$, hence fixes $\{d\}$ as a set. 
 So $\{d\}$ is a $\Stab(h, \succeq)$-orbit, and $\{a, b, c\}$, already a single
$\langle \rho \rangle$-orbit, is also a single
$\Stab(h, \succeq)$-orbit.
 Therefore, $\Omega(h, \succeq) = \{\{a, b, c\}, \{d\}\}$. 
 Both blocks are intervals in $\succeq$, and in fact $\Omega$ here happens to equal
$\Part(\succeq) = \{\{a, b, c\}, \{d\}\}$.

A partition-consistent strict rule $T$ must produce a linear
order on $\N$ in which $\{a, b, c\}$ is an interval (necessarily
above $\{d\}$, since $T$ refines $\succeq$). 
The completion records:
\begin{itemize}[leftmargin=2em]
\item the block-internal order $\beta(\{a, b, c\})$, which is
      one of the six linear orders on $\{a, b, c\}$;
\item $\beta(\{d\})$ is the trivial order on a singleton;
\item $\gamma$ is trivial, since each indifference class contains
      only one block.
\end{itemize}
So the strict rule's output, modulo the canonical structure, is
encoded entirely in the choice of linear order on $\{a, b, c\}$.
There are six such choices, corresponding to six possible values of
$T(h, \succeq)$ across all partition-consistent strict rules. 
 None of these is more honestly justified than the others.

In real-world football regulations, the choice is typically made
by a lottery, by alphabetical order, or by a pre-tournament draw position. 
The framework's perspective is that each of these is a global completion. 
The lottery completion is randomized; the alphabetical completion uses 
external information (the players' names);
 the draw-position completion uses external information
(the pre-tournament random draw). 
All three are equally arbitrary at this input, and the framework records this
arbitrariness explicitly as the freedom in the completion.
\end{example}

\begin{example}[Two Pairs of Symmetric Voters]
\label{ex:two-pairs}
Returning to Example~3, consider a profile $h$ on
$\N = \{a, b, c, d\}$ consisting of eight ballots: two each of
$a \succ b \succ c \succ d$, $b \succ a \succ c \succ d$,
$a \succ b \succ d \succ c$, and $b \succ a \succ d \succ c$. 
Take $\succeq$ to be the all-tied weak order on $\{a, b, c, d\}$.

The transpositions $(a\,b)$ and $(c\,d)$ both fix $h$ (each
permutes the four ballot types among themselves with multiplicities
preserved), but no permutation exchanging an element of
$\{a, b\}$ with an element of $\{c, d\}$ does, since such a
permutation would move some ballot to a ballot type not present in
the profile. To check this concretely on one such permutation,
consider the transposition $(a\,c)$. It maps the ballot
$a \succ b \succ c \succ d$ to $c \succ b \succ a \succ d$, which
is not among the four ballot types in the profile, so $(a\,c)$
does not fix $h$. The same argument applies, mutatis mutandis, to
any other permutation moving an element of $\{a, b\}$ outside
$\{a, b\}$.
 Therefore, $\Stab(h, \succeq) = \langle (a\,b), (c\,d) \rangle$, with orbits $\{a, b\}$ and
$\{c, d\}$ on $\N$. 
So we have
$ \Omega(h, \succeq) = \{\{a, b\}, \{c, d\}\}, $
a partition with two blocks lying within the single
$\succeq$-class.

A partition-consistent strict rule $T$ must produce a linear order
in which both $\{a, b\}$ and $\{c, d\}$ appear as intervals. 
The completion has both pieces of data non-trivial:
\begin{itemize}[leftmargin=2em]
\item $\gamma$ orders the two blocks within the indifference
      class, placing $\{a, b\}$ above $\{c, d\}$ or vice versa,
      two choices in all;
\item $\beta$ orders within each block, with two choices for
      $\{a, b\}$ and two for $\{c, d\}$, four choices in all.
\end{itemize}
Thus, there are $2 \cdot 4 = 8$ partition-consistent strict rules at
this input, none more honestly justified than the others. 
The example complements the previous one: there, $\gamma$ was trivial
and only $\beta$ played a role; here, both pieces of completion
data contribute.
\end{example}

\section{The Weak-Order Perspective}
\label{sec:weak}

Some readers may prefer to think in terms of weak orders rather than partitions. 
The output of the partition-valued rule $\Omega$ is a
partition of $\N$, and one might wonder whether this partition can
be promoted to a weak order in a natural way, giving a weak-order
analogue of the characterization theorem
(Theorem~\ref{thm:characterization}).

The honest answer is that this promotion is \emph{not} canonical.
A weak order on $\N$ refining $\succeq$ carries, in general, more
information than a partition refining $\Part(\succeq)$. 
A weak order consists of a partition (its indifference partition, which must
refine $\Part(\succeq)$) together with a strict order on the blocks
of this partition within each $\succeq$-class. 
 When the partition is properly finer than $\Part(\succeq)$, this additional ordering data
is non-trivial and not determined by the partition alone. 
The orbit partition $\Omega(h, \succeq)$ provides only the partition data, not the ordering. 
There is, in general, no canonical weak order with
indifference partition $\Omega(h, \succeq)$ refining $\succeq$,
because the choice of how to order $\Omega$-blocks within a single
$\succeq$-class is not determined by $\Omega$ alone.

Nonetheless, the weak-order perspective is informative. 
 The precise relationship between weak orders refining $\succeq$ and partitions
refining $\Part(\succeq)$ is captured by the following lemma, which
makes explicit the additional ordering data needed.

\begin{lemma}[Weak Orders versus Partitions Plus Block Orderings]
\label{lem:weak-bijection}
Fix ${\succeq} \in \Wk(\N)$. There is a bijection between
\begin{enumerate}[leftmargin=2em, label=(\roman*)]
\item the set of weak orders $W \in \Wk(\N)$ that refine $\succeq$,
and
\item the set of pairs $(\mathcal{P}, \delta)$ where $\mathcal{P}$
is a partition of $\N$ refining $\Part(\succeq)$, and $\delta$
assigns to each indifference class $E$ of $\succeq$ a linear order
$\delta_E$ on the set of blocks of $\mathcal{P}$ contained in $E$.
\end{enumerate}
The bijection maps a pair $(\mathcal{P}, \delta)$ to the weak order
$W$ defined by: $x \, W \, y$ if and only if either $x \succ y$
strictly in $\succeq$, or $x \sim_\succeq y$ and the $\mathcal{P}$-block
$B_x$ containing $x$ is above (or equal to) the $\mathcal{P}$-block
$B_y$ containing $y$ in $\delta_E$, where $E$ is the
$\succeq$-class containing both $x$ and $y$. 
The inverse maps $W$ to $(\Part(W), \delta^W)$, 
where $\delta^W_E$ is the order on
$\Part(W)$-blocks within $E$ induced by $W$.
\end{lemma}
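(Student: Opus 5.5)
We define the two maps $\Phi\colon (\mathcal{P},\delta)\mapsto W$ and $\Psi\colon W\mapsto(\Part(W),\delta^W)$ exactly as in the statement. The plan is to check that each lands in the claimed set and that $\Psi\circ\Phi$ and $\Phi\circ\Psi$ are identities. Throughout, ``$W$ refines $\succeq$'' is read as in Definition~\ref{def:strict-rule}: $x\succ y$ implies $x\succ_W y$. Note that this already forces $x\sim_W y\Rightarrow x\sim_\succeq y$, hence $\Part(W)\trianglelefteq\Part(\succeq)$.

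\emph{Step 1: $\Phi(\mathcal{P},\delta)$ is a weak order refining $\succeq$.} Since $\mathcal{P}\trianglelefteq\Part(\succeq)$, any $x\sim_\succeq y$ have blocks $B_x,B_y$ inside the common class $E$, so the defining clause makes sense.
\begin{itemize}[leftmargin=2em]
\item Reflexivity holds because $B_x=B_x$.
\item For totality, compare across classes by $\succeq$; within a class, use totality of $\delta_E$.
\item For transitivity of $x\,W\,y\,W\,z$: if either step is strict in $\succeq$, transitivity of $\succeq$ (with $\sim$) gives $x\succ z$. Otherwise all three lie in one $E$, and transitivity of the reflexive closure of $\delta_E$ applies.
\item For refinement, if $x\succ y$ then $x\,W\,y$ by the first clause. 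Also not $y\,W\,x$, since $y\succ x$ fails and $y\not\sim_\succeq x$.
\end{itemize}
This mirrors the transitivity case analysis in Lemma~\ref{lem:lift-well-defined}, simplified by having only two levels.

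\emph{Step 2: $\Psi(W)$ is a valid pair.} By the remark above, $\Part(W)\trianglelefteq\Part(\succeq)$. We then define $\delta^W_E$ on $\Part(W)$-blocks inside $E$ by $B\ge B'$ iff $x\,W\,y$ for some (equivalently any) $x\in B$, $y\in B'$. This is independent of representatives because members of a block are $W$-indifferent. It is a linear order: totality and transitivity come from $W$, and antisymmetry holds because $B\ge B'\ge B$ makes the representatives $W$-indifferent, so $B=B'$.

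\emph{Step 3: inverses.} For $\Phi\Psi(W)=W$: across classes, both $W$ and $\Phi\Psi(W)$ agree with $\succeq$, using refinement and the fact that $W$-ties stay inside $\succeq$-classes. Within a class, $\Phi\Psi(W)$ compares via $\delta^W_E$, which by construction is $W$ on representatives. For $\Psi\Phi(\mathcal{P},\delta)=(\mathcal{P},\delta)$, the key point is $\Part(\Phi(\mathcal{P},\delta))=\mathcal{P}$. We have $x\sim_W y$ iff $x\sim_\succeq y$ and $B_x=B_y$ (by antisymmetry of $\delta_E$), i.e.\ iff $x,y$ share a $\mathcal{P}$-block. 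The recovered block order then equals $\delta_E$ by the defining clause.

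The main obstacle is purely bookkeeping: making sure the indifference classes of $\Phi(\mathcal{P},\delta)$ are exactly the $\mathcal{P}$-blocks. This needs antisymmetry of $\delta_E$ together with the fact that $W$-indifference cannot cross $\succeq$-classes. Everything else is a routine verification.
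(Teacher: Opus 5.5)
Your proposal is correct and follows essentially the same route as the paper. Both arguments construct the two maps, verify that each lands in the right set (reflexivity, totality, the two-case transitivity argument, refinement, and representative-independence plus linearity of $\delta^W_E$), and then check that the two composites are identities. The only difference is where $\Part(W)=\mathcal{P}$ is proved: the paper does it during the forward well-definedness check, while you defer it to the inverse step, which changes nothing of substance.
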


\begin{proof}
We verify both directions, 
then check that the two maps are mutual inverses.

\smallskip\noindent
\emph{Forward direction $(\mathcal{P}, \delta) \mapsto W$ is well-defined.} 
Given $(\mathcal{P}, \delta)$, define $W$ as in the statement. 
We verify that $W$ is a weak order, refines $\succeq$,
and has $\Part(W) = \mathcal{P}$.

The relation $W$ is well-defined; indeed,  every pair $x, y \in \N$ falls
into exactly one of three cases (different $\succeq$-classes, same
$\succeq$-class but different $\mathcal{P}$-blocks, or same $\mathcal{P}$-block), 
and in each case the relation is prescribed
unambiguously.

\emph{Reflexivity.} For $x = y$, the same-block case applies
trivially, and $B_x = B_y$ gives $x \, W \, x$.

\emph{Totality.} For $x \neq y$ in different $\succeq$-classes,
totality of $\succeq$ gives $x \succeq y$ or $y \succeq x$, hence
$x \, W \, y$ or $y \, W \, x$. For $x \neq y$ in the same
$\succeq$-class but different $\mathcal{P}$-blocks, totality of
$\delta_E$ gives $B_x \geq_{\delta_E} B_y$ or $B_y \geq_{\delta_E}
B_x$, hence $x \, W \, y$ or $y \, W \, x$. 
For $x, y$ in the same
$\mathcal{P}$-block, both $x \, W \, y$ and $y \, W \, x$.

\emph{Transitivity.} Suppose $x \, W \, y$ and $y \, W \, z$. 
We must show $x \, W \, z$. 
There are nine subcases according to
which case (different classes, same class different blocks, same
block) applies to each of the two given comparisons. 
We treat them together via the following observation. 
The relation $W$, by its definition, is constructed lexicographically:
 first by the strict order on $\succeq$-classes, then within each $\succeq$-class by the
$\delta$-order on $\mathcal{P}$-blocks, then within each
$\mathcal{P}$-block as the trivial all-tied relation. 
Each layer is a transitive (indeed total) relation, and the lexicographic
combination of transitive relations is transitive. 
To verify this claim concretely, 
let $E_x, E_y, E_z$ be the $\succeq$-classes and
$B_x, B_y, B_z$ the $\mathcal{P}$-blocks of $x, y, z$.

\begin{itemize}[leftmargin=2em]
\item If $E_x \succ E_y$ or $E_y \succ E_z$ (strict in $\succeq$),
then $E_x \succ E_z$ strictly. From $x \, W \, y$ we have
$E_x \succeq E_y$, from $y \, W \, z$ we have $E_y \succeq
E_z$, with at least one strict; so $E_x \succ E_z$ strictly,
and the case ``different classes'' applies for $(x, z)$ giving
$x \, W \, z$.
\item Otherwise $E_x = E_y = E_z =: E$. From $x \, W \, y$ we have
$B_x \geq_{\delta_E} B_y$ (with equality if $B_x = B_y$);
from $y \, W \, z$ we have $B_y \geq_{\delta_E} B_z$. By
transitivity of $\delta_E$ it holds that $B_x \geq_{\delta_E} B_z$, so
 $x \, W \, z$.
\end{itemize}

\emph{$W$ refines $\succeq$.} If $x \succ y$ strictly in $\succeq$,
then $E_x \neq E_y$, and the case ``different classes'' applies. 
By construction, $x \, W \, y$. 
We must also check that  $\neg (y \, W \, x)$.
For $y \, W \, x$ to hold, one of the two cases 
of the construction must apply for $(y, x)$: 
either $y \succ x$ strictly in $\succeq$,
which is excluded by $x \succ y$ strictly; 
or $y \sim_\succeq x$,
which is excluded by $E_x \neq E_y$. 
So $\neg (y \, W \, x)$, and $W$
strictly orders $x$ above $y$.

\emph{$\Part(W) = \mathcal{P}$.} For $x, y$ in the same
$\mathcal{P}$-block, both $x \, W \, y$ and $y \, W \, x$ by
construction; so $x \sim_W y$. 
Conversely, if $x, y$ are in
different $\mathcal{P}$-blocks, two subcases:

\begin{itemize}[leftmargin=2em]
\item Different $\succeq$-classes. 
In this subcase, exactly one of
$x \succ y$ or $y \succ x$ holds strictly in $\succeq$ (by
totality and the assumption of different classes). 
By case 1 of the construction, the strict comparison in $\succeq$ is the
one realized by $W$, while the other direction fails (as case
2 of the construction requires same $\succeq$-class). 
Hence, at most one of $x \, W \, y$, $y \, W \, x$ holds, so
 $x \not\sim_W y$.
\item Same $\succeq$-class, different $\mathcal{P}$-blocks. 
By construction, $x \, W \, y$ holds if and only if $B_x \geq_{\delta_E}
 B_y$, and $y \, W \, x$ holds if and only if $B_y \geq_{\delta_E} B_x$.
Both hold if and only if $B_x = B_y$ by antisymmetry of $\delta_E$,
contradicting different blocks. 
So at most one holds, and
$x \not\sim_W y$.
\end{itemize}
Therefore, the indifference classes of $W$ are exactly the
$\mathcal{P}$-blocks, i.e., $\Part(W) = \mathcal{P}$.

\smallskip\noindent
\emph{Inverse direction $W \mapsto (\Part(W), \delta^W)$ is well-defined.} 
Given $W$ refining $\succeq$, define $\delta^W_E$ on
the set of $\Part(W)$-blocks within each $\succeq$-class $E$ by:
$B_1 \geq_{\delta^W_E} B_2$ if and only if some (equivalently,
every) $x \in B_1$ satisfies $x \, W \, y$ for some
(equivalently, every) $y \in B_2$.

\begin{itemize}[leftmargin=2em]
\item $\Part(W)$ refines $\Part(\succeq)$: if $x \sim_W y$, then
neither $x \succ y$ nor $y \succ x$ strictly in $\succeq$
(else $W$ would strictly order them by refinement), so
$x \sim_\succeq y$. 
Hence, the $\Part(W)$-block of $x$ is
contained in the $\Part(\succeq)$-class of $x$, giving refinement.
\item The ``some, equivalently every'' claim. 
Suppose $x \, W \, y$ with $x \in B_1, y \in B_2$, 
where $B_1,B_2$ are blocks of $\Part(W)$, with $B_1 \neq
B_2$. 
Take $x' \in B_1, y' \in B_2$. 
Since $x' \sim_W x$ (same
block) and $y' \sim_W y$ (same block), transitivity of $W$
gives $x' \, W \, x \, W \, y \, W \, y'$, hence $x' \, W \, y'$. 
Furthermore, $y' \, W \, x'$ would by a symmetric
argument give $y \, W \, x$, hence (with $x \, W \, y$)
$x \sim_W y$, contradicting $B_1 \neq B_2$. So $x' \, W \, y'$ strictly.
\item $\delta^W_E$ is a linear order on the set of $\Part(W)$-blocks
 contained in $E$:
\begin{itemize}[leftmargin=2em]
\item \emph{Reflexivity}: for any block $B$, picking $x \in B$
 and using $x \, W \, x$ (reflexivity of $W$) gives
   $B \geq_{\delta^W_E} B$.
   \item \emph{Antisymmetry}: if $B_1 \geq_{\delta^W_E} B_2$ and
   $B_2 \geq_{\delta^W_E} B_1$, then by the
    ``some, equivalently every'' property, picking
    $x \in B_1, y \in B_2$ gives $x \, W \, y$ and
   $y \, W \, x$, hence $x \sim_W y$, so $B_1 = B_2$.
    \item \emph{Transitivity}: if $B_1 \geq_{\delta^W_E} B_2$ and
    $B_2 \geq_{\delta^W_E} B_3$, picking $x \in B_1$, $y \in
   B_2$, $z \in B_3$ gives $x \, W \, y$ and $y \, W \, z$,
    hence $x \, W \, z$ by transitivity of $W$, so
    $B_1 \geq_{\delta^W_E} B_3$.
    \item \emph{Totality}: for any two blocks $B_1, B_2$ within
    $E$, picking $x \in B_1, y \in B_2$ and applying
    totality of $W$ gives $x \, W \, y$ or $y \, W \, x$,
    hence $B_1 \geq_{\delta^W_E} B_2$ or
     $B_2 \geq_{\delta^W_E} B_1$.
     \end{itemize}
    \end{itemize}

\smallskip\noindent
\emph{The two maps are mutual inverses.} Given $(\mathcal{P},
\delta)$, construct $W$, then read off $(\Part(W), \delta^W)$. 
We have $\Part(W) = \mathcal{P}$ by the verification above.
For $\delta^W$, we reason as follows. 
Within each $\succeq$-class $E$, blocks $B_1, B_2$ of
$\mathcal{P} = \Part(W)$ satisfy $B_1 \geq_{\delta^W_E} B_2$ 
if and only if
some $x \in B_1$ has $x \, W \, y$ for some $y \in B_2$, 
if and only if
 $B_1 \geq_{\delta_E} B_2$ by the construction of $W$. 
Hence, $\delta^W = \delta$.

Conversely, given $W$, read off $(\Part(W), \delta^W)$, then construct $W'$. 
By construction, $W'$ ranks pairs in different
$\succeq$-classes by $\succeq$ (matching $W$, which refines
$\succeq$), pairs in the same $\succeq$-class but different
$\Part(W)$-blocks by $\delta^W$ (matching $W$, by the definition
of $\delta^W$), and pairs in the same $\Part(W)$-block as tied
(matching $W$, since $\Part(W)$ is the indifference partition of $W$). 
We conclude that $W' = W$.
\end{proof}

This bijection makes the structural content of weak orders refining $\succeq$ precise. 
Such a weak order consists of a partition (the indifference partition, 
which must refine $\Part(\succeq)$), plus a
linear ordering of the partition-blocks within each $\succeq$-class. 
The orbit partition $\Omega(h, \succeq)$ supplies
the first piece of data canonically; the second piece, the
ordering of $\Omega$-blocks within each $\succeq$-class, is not
determined by $\Omega$ alone and constitutes a genuine choice. 
This is the same choice that appears in the decomposition theorem as
the block-ordering $\gamma$ in a completion.

We may therefore restate the decomposition theorem
(Theorem~\ref{thm:decomposition}) in weak-order language as
follows.

\begin{corollary}[Weak-Order Decomposition]
\label{cor:weak-decomposition}
There is a bijection between
\begin{enumerate}[leftmargin=2em, label=(\roman*)]
\item pairs $(W, \beta)$, where $W$ assigns to each input
      $(h, \succeq)$ a weak order $W(h, \succeq)$ in $\Wk(\N)$
      refining $\succeq$ with indifference partition equal to
      $\Omega(h, \succeq)$, and $\beta$ is a global block-internal
      order in the sense of Definition~\ref{def:completion}, and
\item partition-consistent strict tie-breaking rules.
\end{enumerate}
The forward map sends $(W, \beta)$ to the strict rule $T$ defined
at each input $(h, \succeq)$ as follows. 
For distinct $x, y \in \N$, set $x >_{T(h, \succeq)} y$ if and only if either
\begin{itemize}[leftmargin=2em]
\item $x$ and $y$ lie in different indifference classes of
      $\succeq$ with $x \succ y$ in $\succeq$;
\item $x$ and $y$ lie in the same indifference class of $\succeq$
      but in different $\Omega(h, \succeq)$-blocks, with the block
      of $x$ above the block of $y$ in $W(h, \succeq)$;
\item $x$ and $y$ lie in the same $\Omega(h, \succeq)$-block $B$
      and $x >_{\beta_{h, \succeq}(B)} y$.
\end{itemize}
\end{corollary}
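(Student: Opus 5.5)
The plan is to compose two bijections that are already available: the decomposition bijection of Theorem~\ref{thm:decomposition}, and, applied input by input, the weak-order bijection of Lemma~\ref{lem:weak-bijection}. Given the pair $(W,\beta)$, I will produce a global completion $(\beta,\gamma)$ and then lift it. Fix an input $(h,\succeq)$. By hypothesis, $W(h,\succeq)$ refines $\succeq$ and has indifference partition $\Omega(h,\succeq)$. Lemma~\ref{lem:weak-bijection}, with $\succeq$ fixed, therefore sends $W(h,\succeq)$ to a pair $(\Omega(h,\succeq),\delta^{W(h,\succeq)})$. Here $\delta^{W(h,\succeq)}_E$ is a linear order on the $\Omega$-blocks contained in each indifference class $E$, which is exactly a block ordering $\gamma_{h,\succeq}$ in the sense of Definition~\ref{def:completion}.

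Next I show that the admissible $W(h,\succeq)$ at a fixed input correspond bijectively to block orderings $\gamma_{h,\succeq}$. The lemma gives a bijection between all weak orders refining $\succeq$ and all pairs $(\mathcal P,\delta)$ with $\mathcal P \trianglelefteq \Part(\succeq)$. Restricting both sides to the fibre $\mathcal P=\Omega(h,\succeq)$ keeps it a bijection. This uses Lemma~\ref{lem:Omega-refines}, which guarantees $\Omega(h,\succeq)$ is a legitimate first coordinate, and it uses the fact that the lemma's bijection preserves the partition coordinate ($\Part(W)=\mathcal P$). Taking the product over all inputs, $W \mapsto \gamma$ is a bijection from the assignments in (i) to global block orderings. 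Pairing with $\beta$ unchanged then gives a bijection $(W,\beta)\mapsto(\beta,\gamma)$ onto global completions. Composing with $C\mapsto T_C$ from Theorem~\ref{thm:decomposition} yields the desired bijection onto partition-consistent strict rules.

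The step that needs care, though it is routine, is checking that this composite agrees with the explicit three-case formula in the statement. Cases one and three coincide verbatim with Definition~\ref{def:lift}. For case two, take $x,y$ in the same class $E$ but in different blocks $B_x\neq B_y$. The lift compares them by $\gamma_{h,\succeq,E}=\delta^{W(h,\succeq)}_E$. By the ``some, equivalently every'' property established in the proof of Lemma~\ref{lem:weak-bijection}, $B_x$ lies above $B_y$ in this order exactly when $x$ is strictly above $y$ in $W(h,\succeq)$. Strictness holds because distinct blocks are distinct $W$-indifference classes. This is precisely the statement's phrase ``the block of $x$ above the block of $y$ in $W(h,\succeq)$.''

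I expect no real obstacle beyond this bookkeeping. The one subtle point is interpreting ``the block of $x$ above the block of $y$ in $W$'' as the induced strict order on $\Part(W)$-blocks, which is well defined by that same lemma.
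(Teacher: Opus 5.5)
Your proposal is correct and follows the paper's proof. Like the paper, you apply Lemma~\ref{lem:weak-bijection} input by input with $\mathcal{P} = \Omega(h, \succeq)$ to identify the admissible $W$ with global block orderings $\gamma$. You then compose with the bijection of Theorem~\ref{thm:decomposition} and check that the lift $T_C$ matches the three-case formula. You are slightly more explicit than the paper on two points: why restricting to that fibre stays a bijection (via Lemma~\ref{lem:Omega-refines} and $\Part(W) = \mathcal{P}$), and reading ``block of $x$ above block of $y$'' as the strict order on blocks induced by $W$, which the paper's agreement check uses implicitly.
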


\begin{proof}
By Lemma~\ref{lem:weak-bijection} applied at each input $(h,
\succeq)$ with $\mathcal{P} = \Omega(h, \succeq)$, the data of a
weak order $W(h, \succeq)$ refining $\succeq$ with $\Part(W(h,
\succeq)) = \Omega(h, \succeq)$ is equivalent to the data of a
block-ordering $\gamma_{h, \succeq}$ assigning to each indifference
class $E$ of $\succeq$ a linear order on the $\Omega$-blocks
contained in $E$. 
Hence, pairs $(W, \beta)$ as in (i) are in
bijection with global completions $(\beta, \gamma)$ in the sense of
Definition~\ref{def:completion}. 
By Theorem~\ref{thm:decomposition}, global completions are in
bijection with partition-consistent strict tie-breaking rules.
Composing the two bijections gives the claim.

To verify that the forward map of the corollary agrees with the
forward map of Theorem~\ref{thm:decomposition} (the lift $T_C$), we 
reason as follows. 
Under the bijection of Lemma~\ref{lem:weak-bijection}, the
block-ordering induced by $W(h, \succeq)$ on $\Omega$-blocks within
$E$ is precisely $\gamma_{h, \succeq, E}$. 
The lift's across-block-within-class comparison uses $\gamma$, which by this
correspondence is the across-block comparison induced by $W$. 
The lift's within-block comparison uses $\beta$. 
The lift's across-class comparison uses $\succeq$, 
which is recovered from $W$
since $W$ refines $\succeq$. 
So the lift agrees with the rule
described in the corollary's statement.
\end{proof}

By Lemma~\ref{lem:weak-bijection}, the weak order $W(h, \succeq)$
in part (i) of the corollary is equivalent, at each input, to a
block-ordering $\gamma_{h, \succeq}$ in the sense of
Definition~\ref{def:completion}: $W$ encodes the block-ordering
data $\gamma$ in weak-order language. 
We use whichever notation is
more convenient in the discussion below.

The picture that emerges is a three-level structure:

\begin{itemize}[leftmargin=2em]
\item The orbit partition $\Omega(h, \succeq)$ is canonical, 
forced by the framework, with no choice involved.
\item A weak order $W(h, \succeq)$ refining $\succeq$ with
$\Part(W) = \Omega$ is partially canonical. 
Its indifference structure is forced (it must equal $\Omega$), 
but the strict order on $\Omega$-blocks within each $\succeq$-class is a choice.
\item A linear order $T(h, \succeq)$ refining $W$ resolves all
remaining ties. 
Beyond the structure of $W$, it specifies a
linearization within each $\Omega$-block.
\end{itemize}

The free choices accumulate as we move from one level to the next.
The orbit partition is the canonical core; the weak order adds
block-ordering data $\gamma$; the linear order adds within-block
linearization data $\beta$. 
 Each layer of completion data corresponds
to a layer of arbitrariness in the resulting tie-breaking output.

\begin{remark}[Absence of Anonymity in
Corollary~\ref{cor:weak-decomposition}]
\label{rem:weak-no-anon}
The bijection of Corollary~\ref{cor:weak-decomposition} treats $W$
and $\beta$ as arbitrary functions of the input, with no equivariance condition imposed. 
Correspondingly, the strict rules on
the right-hand side of the bijection need not be anonymous. 
This is the appropriate setting for the corollary, because by
Theorem~\ref{thm:impossibility}, anonymous strict tie-breaking rules
do not exist whenever symmetric inputs are present (and they are
present in essentially every realistic application). 
Imposing anonymity on $W$ and $\beta$ would empty out both sides of the
bijection in any nontrivial setting.

The orbit partition $\Omega$, by contrast, is anonymous in the sense
of Lemma~\ref{lem:Omega-equivariant}; this is exactly why
$\Omega$ alone, viewed as a partition-valued rule, can be
characterized axiomatically (Theorem~\ref{thm:characterization}).
The weak-order and linear-order layers cannot be characterized this
way because they involve the free choices $\gamma$ and $\beta$,
which are precisely the ingredients ruled out by anonymity.
\end{remark}

\section{Related Work}
\label{sec:related}

Our framework was developed independently of the social-choice
literature on what is known as the ANR-impossibility
 (anonymity, neutrality, resolvability), 
  which we describe here for context. 
This voting-specific impossibility, in its various refinements,
is well-established, and several techniques in that literature
overlap with ours.

\subsection{The ANR-Impossibility Tradition}

The ANR tradition addresses the incompatibility of anonymity,
neutrality, and resolvability in voting. The framework is fixed
(voters with linear-order preferences, voting on alternatives), and
the question is for which parameter values $(n, m)$ a rule
satisfying all three properties exists. The answer is given by
arithmetic conditions on the parameters.

\paragraph{Moulin (1983).} Moulin~\cite{moulin1983} proved that, for
voting with $n$ voters and $m$ alternatives where preferences are
linear orders and the decision is a single alternative, an
anonymous, neutral, resolute rule exists if and only if
$\gcd(n, m!) = 1$. 
This is a parametric impossibility: the existence of fair rules
depends on arithmetic conditions on $(n, m)$.

\paragraph{Bubboloni and Gori.} In a sequence of papers from
2014 onward, Bubboloni and Gori extended Moulin's result
substantially: to rankings as decisions~\cite{bubboloni2014}, to
social choice correspondences with refinements~\cite{bubboloni2016},
to multi-winner committees, and to weakened versions of anonymity
and neutrality determined by subgroups~\cite{bubboloni2021}. 
Their framework is voting-specific, with voter-profile spaces and group
actions of $\Sym_h$ on voters and $\Sym_n$ on alternatives, and
their analyses give detailed arithmetic conditions involving subgroups
$V \leq \Sym_h$, $W \leq \Sym_n$.

\paragraph{Xia (2023, 2024).} Xia~\cite{xia2023, xia2024} introduces
the notion of a ``fixed-point decision'' (a decision $d$ such that
$\sigma(d) = d$ for every $\sigma$ stabilizing the histogram of
the profile), shows that ``most equitable refinements'' (rules
satisfying ANR at every non-problematic profile) always exist,
and gives polynomial-time algorithms for computing them. 
The structural insight that a non-trivial permutation cannot
fix a strict ranking is shared with our argument, though our
formulation runs in a setting-independent framework.

\subsection{Other Adjacent Threads}

Beyond the ANR-impossibility tradition, several other strands of
research touch on themes related to ours. 
None of them addresses the abstract tie-breaking problem we study, 
but each shares some mathematical or conceptual machinery, 
and we describe the connections briefly.

\paragraph{Tournament ranking.} 
A separate body of work, much of
it by Csat\'o~\cite{csato2017}, studies axiomatic properties of
specific scoring methods (such as row-sum, generalized row-sum,
least-squares) for tournaments. 
These works are concrete (real-valued scores) and method-specific, where ours is abstract.

\paragraph{Symmetry-based fairness.}
Bartholdi, Hann-Caruthers, Josyula, Tamuz, and
Yariv~\cite{bartholdi2020} study ``equitable'' rules, defined as
those having a transitive automorphism group, with the focus on the
size of winning coalitions in two-alternative voting. 
This is a different question, but the group-theoretic toolkit is similar.

\paragraph{Computational social choice.} 
Many of the ideas
discussed above sit within the broader field of computational
social choice, which studies aggregation procedures from a
combination of axiomatic, algorithmic, and complexity-theoretic perspectives. 
The standard reference for the field is the
\emph{Handbook of Computational Social Choice}~\cite{handbook2016},
edited by Brandt, Conitzer, Endriss, Lang, and Procaccia, whose
chapters on social choice axiomatics, tournament solutions, and
voting tie-breaking touch directly on the questions our framework addresses. 
Our framework is axiomatic in the classical sense rather
than computational, but the questions it raises (which features of a
tie-breaking rule are forced by symmetry, which are free choices)
are of a kind familiar to readers of that volume.

\paragraph{Saari's program.} 
Saari~\cite{saari1995} pioneers the
use of group theory to understand voting paradoxes. 
The focus there is on positional scoring rules and the structure of voting
paradoxes (rather than tie-breaking specifically), but the
spirit (using symmetry to explain aggregation phenomena) is shared.

\paragraph{Connection to Arrow.} 
As noted in the introduction,
the Condorcet cycle in Arrow's framework~\cite{arrow1963} is
structurally a symmetric input in our framework. 
Indeed,  the cyclic permutation maps the cyclic profile to itself, 
forcing anonymity to identify the candidates. 
Arrow's impossibility theorem can
be viewed, in this light, as a voting-specific instance of the
same phenomenon our impossibility theorem captures abstractly.
We do not develop this connection formally here; 
it is one of several directions for further work.

\subsection{Contributions of the Present Paper}

The following ingredients of the present paper are new.
\begin{itemize}[leftmargin=2em]
\item \emph{The framework-free setup.} 
Our framework abstracts to any
$\Sym(\N)$-set $\I$ and applies equally to chess
tournaments, sports leagues, cooperative games, network
centrality, and other settings without voters or alternatives.
\item \emph{The unconditional formulation of the impossibility.}
The earlier results are parametric, with arithmetic conditions
on $(m, n)$ or on subgroup sizes. Our impossibility theorem is
unconditional: a single symmetric input anywhere in
$\I \times \Wk(\N)$ kills strict anonymous tie-breaking,
and symmetric inputs exist in essentially every realistic application.
\item \emph{The partition-valued framework.} 
The earlier tradition outputs linear orders, weak orders, or sets
of decisions. The choice to output partitions of players appears
not to have been studied in the social-choice literature, and it
gives a substantively cleaner characterization theorem.
\item \emph{The characterization of $\Omega$ through
symmetry-saturation and maximal-fineness.} 
The orbit-sta\-bi\-liz\-er machinery is well known and used (notably,
extensively by Xia), but the axiomatic
characterization in terms of these two specific axioms is novel.
\item \emph{The decomposition theorem.} 
The bijection $T \leftrightarrow (\Omega, \text{completion})$ has
not been stated in the existing literature.
\end{itemize}

\section{Conclusion and Further Directions}
\label{sec:conclusion}

We have given an abstract axiomatic theory of tie-breaking. 
The framework rests on a single piece of structure, namely the action of the
symmetric group on an abstract information space, and produces three theorems. 
The impossibility theorem rules out anonymous strict
tie-breaking in any reasonable instance of the framework, 
in a form sharper than the parametric ANR-impossibility results in the social-choice literature. 
A single symmetric input anywhere in the
input space forces the conclusion, with no dependence on parameters
such as the number of players or the size of the input space. 
The characterization theorem identifies the unique fair partition-valued rule as the orbit
partition $\Omega$, the partition into orbits of the joint stabilizer. 
The decomposition theorem expresses every reasonable
strict rule as the canonical partition $\Omega$ plus an arbitrary
completion, in a bijection between completion data and
partition-consistent strict rules.

The conceptual picture that emerges is the following. 
Tie-breaking, considered abstractly, has a forced part and a free part. 
The forced part is the orbit partition: the canonical record of what the input
determines, computed directly from the symmetries of the input. 
The free part is the completion: the arbitrary data needed to linearize
the canonical partition into a strict order. 
Every concrete tie-breaking practice we are aware of (chess Swiss-system cascades,
sports lexicographic regulations, voting tie-breakers, cooperative
game power indices, network centrality measures) sits within this common structure. 
Each such practice can be classified by its completion data, 
and the framework gives a uniform language for comparing and analyzing them. 
Practices that look very different on the surface 
 (a chess Buchholz cascade and a Borda-count tie-breaker)
turn out to share the same canonical core, 
with the differences living entirely in the completion. 
The framework makes this unification precise.

The theory also gives a clean answer to a question that practitioners
of tie-breaking have wrestled with implicitly for as long as
tie-breaking has existed: how much of a tie-breaking method is
forced by the data, and how much is the practitioner's free choice?
Theorem~\ref{thm:decomposition} answers this exactly. 
The forced part is $\Omega$; everything else is free. 
Tie-breaking systems used in practice, in this precise sense, are honest until forced to be arbitrary. 
In fact, there is no honest strict tie-breaking method that escapes this division. 
The practitioner's task, in any concrete setting, is to be honest about
which of their choices belong to the forced canonical core and which
belong to the arbitrary completion.

Several directions for further work suggest themselves.

\paragraph{Probabilistic completions.}
Theorem~\ref{thm:decomposition} characterizes strict completions as
deterministic linearizing data. 
A natural extension considers probability distributions over completions, 
i.e., randomized tie-breaking rules. 
The canonical randomized tie-breaker chooses
uniformly among all completions, 
with each block-ordering $\gamma$ and
each block-internal linearization $\beta$ selected uniformly.
This is the analogue, in our setting, of the random dictatorship
rule in social choice and Gibbard's randomized
mechanisms~\cite{gibbard1977}. 
By construction, this rule recovers anonymity in the marginal sense, 
since under it, the probability that
player $x$ is ranked above player $y$ depends only on the structural
position of $x$ and $y$ in the input (specifically, on whether they
lie in the same $\Omega$-block, the same $\succeq$-class but
different $\Omega$-blocks, or different $\succeq$-classes), not on their identities. 
We expect the analogue of Theorem~\ref{thm:decomposition} to hold for probabilistic rules,
with the canonical part being $\Omega$ and the free part being a
distribution over completions.

\paragraph{Connection to Arrow.} 
Our framework structurally subsumes Arrow's setting: 
the Condorcet cycle, which is the canonical input forcing Arrow's theorem, 
appears as a symmetric input in our
framework, and our impossibility theorem
(Theorem~\ref{thm:impossibility}) reduces to a voting-specific
Arrow-like statement under the appropriate instantiation. 
The informal connection sketched in Section~\ref{sec:related} 
between our tie-breaking impossibility theorem and
Arrow's famous impossibility theorem deserves to be made formal. 
The natural instantiation takes $\I$ to be the space of voter profiles
 (anonymized, as a $\Sym(\N)$-set under candidate relabeling), 
 with $\succeq$ the output of some primary aggregation. 
 Arrow's setting, in which the input is the profile and the output is a social ranking,
 can be recovered by taking $\succeq$ to be the all-tied weak order and
asking the rule to break all ties. 
The Arrow-like conclusion, that no anonymous, neutral, transitive social welfare function exists,
should follow from Theorem~\ref{thm:impossibility} once we identify
the appropriate symmetric inputs. 
The Condorcet cycle is one such input, but not the only one. 
We expect that the Independence of Irrelevant Alternatives
axiom corresponds to a structural compatibility condition between
the rule and certain restrictions of the action, but the exact
formulation requires care. 
Working this connection out explicitly would give a uniform group-theoretic account of Arrow-style
impossibilities, of which our Theorem~\ref{thm:impossibility} would
be the abstract template and Arrow's the voting-specific instance.

\paragraph{Concrete instantiations.} The framework should be tested
on additional concrete settings, beyond the five worked examples
in Section~\ref{subsec:examples}. 
Of particular interest is the lexicographic structure of cascading tie-breakers. 
Real-world practices typically cascade, such as in chess: 
 first compute Buchholz, breaking ties by it;
  if ties remain, compute Sonneborn--Berger and break with
that; if ties still remain, fall through to direct encounter, then to lots. 
Each level of the cascade is a tie-breaking rule in our
sense, and the cascade as a whole has a natural nested-completion structure. 
The orbit partition at level $k$ refines the orbit
partition at level $k-1$, since later levels have access to more auxiliary data. 
A separate axiomatic treatment of cascading
tie-breakers, building on Theorem~\ref{thm:decomposition}, 
would clarify the structure of practical tie-breaking systems and the
sense in which adding a level of the cascade is a refinement operation. 
We expect such a theory to identify which features of a
cascade are intrinsic (forced by the canonical structure at each
level) and which are arbitrary (completion choices). 
Other concrete instantiations of interest include weighted voting (where players
have different voting weights and the symmetric group must be
restricted to weight-preserving permutations), preference
aggregation with abstentions, and team-based tournaments with
internal team structure.

\paragraph{Equivariant completions.}
Theorem~\ref{thm:decomposition} treats completions as fully
arbitrary, with no structural relationship between the completion data at different inputs. 
In practice, tie-breaking methods used in deployed systems
are computed by uniform algorithms whose outputs depend on the
input in a structured way, where changing the input slightly changes the completion slightly. 
A natural restriction on completions is to ask
that they be equivariant under some sub-group of $\Sym(\N)$. 
Full $\Sym(\N)$-equivariance is impossible by
Theorem~\ref{thm:impossibility}, but partial equivariance, for
example under cyclic groups generated by named permutations, or up
to lexicographic adjustments, may be tractable. 
We conjecture that imposing equivariance under various sub-groups gives a stratified
hierarchy of admissible completions, with stronger equivariance
giving stronger structural constraints. 
The interaction between equivariance and computability of the completion is also of
interest, since  a completion that is equivariant under a large sub-group
is, in some sense, more uniform and thus simpler to specify.

\paragraph{Categorical reformulation.} 
The framework has a natural categorical flavor. 
The information space $\I$ is an object in
the category of $\Sym(\N)$-sets, with $\Sym(\N)$-equivariant functions as morphisms. 
The orbit partition $\Omega$ is functorial
in a precise sense: a $\Sym(\N)$-equivariant map
$\phi \colon \I \to \I'$ satisfies
$\Omega(h, \succeq) \trianglelefteq \Omega(\phi(h), \succeq)$ for
every input. That is, applying $\phi$ can only coarsen the orbit
partition, never refine it, since equivariant maps preserve all
symmetries of the source.
The decomposition theorem, properly interpreted, has a natural-transformation structure: 
the assignment ``rule $\mapsto$ completion'' is functorial in a suitable sense.
Whether the framework can be lifted to a categorical theory of
tie-breaking, and whether such a theory connects to other
equivariant frameworks (equivariant homotopy theory, equivariant
$K$-theory, the representation theory of the symmetric group), is
a substantial question that we have not seriously attempted to answer as yet. 
A promising starting point is the observation that the
orbit-partition functor is essentially the construction that sends
a $\Sym(\N)$-set to its quotient by the natural relation, viewed
through the joint stabilizer. 
This is reminiscent of constructions
in (equivariant) algebraic topology where one passes from a $G$-space
to its space of orbits, and the analogy may be more than formal.

\paragraph{Beyond the symmetric group.} 
The framework as stated takes the relevant symmetry group 
to be the full symmetric group $\Sym(\N)$. 
There is no reason in principle that this should be the only choice.
 In settings with structural constraints (teams in
team competitions, parties in proportional elections, fixed
seedings in tournament brackets), the natural symmetry group is a
proper subgroup of $\Sym(\N)$. 
The same theorems as proved here
should go through, mutatis mutandis, for any symmetry group $G
\leq \Sym(\N)$, with $G$-anonymity replacing anonymity, $G$-orbits
replacing $\Sym(\N)$-orbits, 
and $G$-symmetric inputs replacing symmetric inputs. 
The resulting theory would be parametric in $G$
and would interpolate between the fully symmetric setting (when
$G = \Sym(\N)$) and the trivially asymmetric setting (when $G$ is trivial). 
Working out this generalization is straightforward in
principle and may yield additional structural insights for
applications with constrained symmetry.


\end{document}